\documentclass[11pt,a4paper]{article}
\pdfoutput=1

\usepackage[utf8]{inputenc}
\usepackage{microtype}
\usepackage[bookmarks]{hyperref}
\hypersetup{colorlinks=true,citecolor=blue,linkcolor=blue,filecolor=blue,urlcolor=blue}
\usepackage{amsmath,amssymb,amsthm}
\usepackage{dsfont}
\usepackage{fullpage}
\usepackage{cleveref}

\usepackage{autonum}
\usepackage{enumerate}
\usepackage{mathtools} 

\allowdisplaybreaks[4]
\newtheorem{theorem}{Theorem} 
\newtheorem{lemma}[theorem]{Lemma}
\newtheorem{proposition}[theorem]{Proposition}

\theoremstyle{definition}

\newtheorem{remark}[theorem]{Remark}


	
\newcommand{\cB}{\mathcal{B}}

\newcommand{\cD}{\mathcal{D}}

\newcommand{\cH}{\mathcal{H}}

\newcommand{\cK}{\mathcal{K}}
\newcommand{\cL}{\mathcal{L}}
\newcommand{\cM}{\mathcal{M}}
\newcommand{\cN}{\mathcal{N}}

\newcommand{\cP}{\mathcal{P}}

\newcommand{\cR}{\mathcal{R}}
\newcommand{\cS}{\mathcal{S}}

\newcommand{\cX}{\mathcal{X}}


\newcommand{\tD}{\widetilde{D}}
\newcommand{\tH}{\bar{H}}
\newcommand{\tS}{\widetilde{S}}

\newcommand{\tQ}{\widetilde{Q}}
\newcommand{\tsigma}{\tilde{\sigma}}

\newcommand{\one}{\mathds{1}}

\DeclareMathOperator{\tr}{Tr}
\DeclareMathOperator{\rk}{rk}
\DeclareMathOperator{\id}{id}

\DeclareMathOperator{\supp}{supp}
\DeclareMathOperator*{\argmax}{arg\,max}
\DeclareMathOperator*{\argmin}{arg\,min}

\DeclareMathOperator{\spec}{spec}

\newcommand{\sumi}{\sum\nolimits}

\newcommand{\ox}{\otimes}

\newcommand{\doublemid}{\,\middle\|\,}

\usepackage[backend=bibtex8,sorting=nty,sortcites=false,firstinits=true,doi=false,isbn=false,url=false,maxbibnames=10]{biblatex}
\renewbibmacro{in:}{}
\addbibresource{../../papers.bib}
 
\title{Data processing for the sandwiched Rényi divergence:\\ a condition for equality}
\author{Felix Leditzky\thanks{Corresponding author; e-mail: \texttt{f.leditzky@statslab.cam.ac.uk}} \and Cambyse Rouzé \and Nilanjana Datta}
\date{
\textit{\small Statistical Laboratory, Centre for Mathematical Sciences, University of Cambridge,}\\
\textit{\small Wilberforce Road, Cambridge CB3 0WB, United Kingdom} \\[0.4cm]
\today
}

\begin{document}
\maketitle

\begin{abstract}
The $\alpha$-sandwiched Rényi divergence satisfies the data processing inequality, i.e.~monotonicity under quantum operations, for $\alpha\geq 1/2$.
In this article, we derive a necessary and sufficient algebraic condition for equality in the data processing inequality for the $\alpha$-sandwiched Rényi divergence for all $\alpha\geq 1/2$.
For the range $\alpha\in [1/2,1)$, our result provides the only condition for equality obtained thus far.
To prove our result, we first consider the special case of partial trace, and derive a condition for equality based on the original proof of the data processing inequality by Frank and Lieb \cite{FL13} using a strict convexity/concavity argument.
We then generalize to arbitrary quantum operations via the Stinespring Representation Theorem.
As applications of our condition for equality in the data processing inequality, we deduce conditions for equality in various entropic inequalities.
We formulate a Rényi version of the Araki-Lieb inequality and analyze the case of equality, generalizing a result by Carlen and Lieb \cite{CL12} about equality in the original Araki-Lieb inequality.
Furthermore, we prove a general lower bound on a Rényi version of the entanglement of formation, and observe that it is attained by states saturating the Rényi version of the Araki-Lieb inequality.
Finally, we prove that the known upper bound on the entanglement fidelity in terms of the usual fidelity is saturated only by pure states.
\end{abstract}

\paragraph{Mathematics Subject Classification: } 47N50, 81P40, 81P45, 94A17, 94A40

\paragraph{Keywords: } Relative entropies, Rényi entropies, Data processing inequality, Equality condition, Conditional entropy, Entanglement of formation

\section{Introduction}\label{sec:intro}
The concept of a relative entropy is fundamental in quantum information theory.
One of the most important examples is the \emph{quantum relative entropy} (QRE), defined for a quantum state $\rho$ and a positive semidefinite operator $\sigma$ with $\supp\rho\subseteq\supp\sigma$ as\footnote{See \Cref{sec:main-result} for definitions and notation.}
\begin{align}\label{eq:quantum-relative-entropy}
D(\rho\|\sigma) \coloneqq \tr(\rho(\log\rho-\log\sigma)).
\end{align}
The QRE satisfies $D(\rho\|\sigma)\geq 0$ if both $\rho$ and $\sigma$ are quantum states, and has an important operational interpretation as a measure of distinguishability of two quantum states, as it characterizes the minimal type-II error in asymmetric quantum hypothesis testing \cite{HP91,ON00}.
In addition, the QRE acts as a parent quantity for various entropic quantities (such as the von Neumann entropy, the conditional entropy, and the Holevo quantity), which characterize the optimal rates of information-theoretic tasks in the asymptotic memoryless setting. 
A key ingredient in establishing these characterizations of information-theoretic tasks is the data processing inequality (DPI), which states that the QRE cannot increase under the joint action of a quantum operation $\Lambda$, 
\begin{align}\label{eq:data-processing-inequality}
D(\rho\|\sigma) \geq D(\Lambda(\rho)\|\Lambda(\sigma)).
\end{align}
Using the operational interpretation of $D(\cdot\|\cdot)$ as a measure of distinguishability, we can interpret \eqref{eq:data-processing-inequality} in the following way: 
A physical transformation (modeled by the quantum operation $\Lambda$) of a quantum system cannot enhance our ability to distinguish between two quantum states $\rho$ and $\sigma$ describing the system.

A natural question, however, is to ask when a quantum operation does not affect the distinguishability of $\rho$ and $\sigma$.
More precisely, given a quantum operation $\Lambda$, we are interested in characterizing those $\rho$ and $\sigma$ for which we have \emph{equality} in the DPI, that is,
\begin{align}\label{eq:dpi-equality}
D(\rho\|\sigma) = D(\Lambda(\rho)\|\Lambda(\sigma)).
\end{align}
The answer to this question was given by Petz \cite{Pet86c,Pet88}, who proved that \eqref{eq:dpi-equality} holds if and only if there exists a \emph{recovery map} given by a quantum operation $\cR$ which reverses the action of $\Lambda$ on $\rho$ and $\sigma$, that is, $\cR(\Lambda(\rho)) = \rho$ and $\cR(\Lambda(\sigma)) = \sigma$ (see \Cref{sec:conclusion} for a precise statement).
This property of $\Lambda$ is also called \emph{sufficiency} \cite{Pet86c,Pet88,MP04,Mos05,JP06,Jen12,HM16,Jen16}.
Petz's result about equality in the DPI has found important applications in quantum information theory.
For example, in \cite{HJPW04} it was used to characterize the case of equality in the strong subadditivity of the von Neumann entropy \cite{LR73}, giving rise to the concept of a short quantum Markov chain.
Moreover, sparked by a breakthrough result by Fawzi and Renner \cite{FR14} relating the notion of recoverability to states with small conditional mutual information, there has been a recent surge of interest in the topic of recoverability \cite{BHOS15,BT15,SFR15,STH15,Wil15,DW15,JRS+15}.
Note that strong subadditivity is equivalent to non-negativity of the quantum conditional mutual information, and hence there is an intimate connection between recoverability and saturation of strong subadditivity.

In general, we call a real-valued functional $\cD(\cdot\|\cdot)$ on pairs of positive semidefinite operators a (generalized) relative entropy if it is non-negative on quantum states and satisfies the DPI $\cD(\rho\|\sigma)\geq \cD(\Lambda(\rho)\|\Lambda(\sigma))$ for any quantum operation $\Lambda$.
An important family of relative entropies is given by the \emph{quantum Rényi divergences}, two important variants of which are known as the \emph{$\alpha$-relative Rényi entropy} ($\alpha$-RRE) and the \emph{$\alpha$-sandwiched Rényi divergence} ($\alpha$-SRD).
These can be seen as special cases of a two-parameter family of relative entropies known as $\alpha$-$z$-Rényi relative entropies \cite{AD15}.
For $\alpha\in (0,\infty)\setminus \lbrace 1\rbrace$ and positive semidefinite operators $\rho$ and $\sigma$, the $\alpha$-RRE $D_\alpha(\rho\|\sigma)$ \cite{Pet86} is defined as
\begin{align}
D_\alpha(\rho\|\sigma) \coloneqq \begin{cases}
\frac{1}{\alpha-1}\log \left\lbrace(\tr\rho)^{-1} \tr\left(\rho^\alpha \sigma^{1-\alpha}\right) \right\rbrace & \!\begin{aligned}[t] &\text{if $\supp\rho\subseteq\supp\sigma$ or} \\ & \text{($\alpha\in (0,1)$ and $\rho \not\perp\sigma$)} \end{aligned} \\[1ex]
+\infty & \text{otherwise}.
\end{cases}
\end{align}
The values at $0$, $1$ and $\infty$ are determined by taking the respective limits, with $\lim_{\alpha\to 1}D_\alpha(\rho\|\sigma) = D(\rho\|\sigma)$.
The $\alpha$-RRE is non-negative for quantum states $\rho$ and $\sigma$, and satisfies the DPI for $\alpha\in[0,2]$ \cite{Lie73,Pet86,Uhl77}.
It has direct operational interpretations as generalized cut-off rates in quantum hypothesis testing \cite{MH11} and error exponents in composite hypothesis testing \cite{HT14}.
Hiai \emph{et al.} \cite{HMPB11} derived necessary and sufficient conditions for equality in the DPI for $D_\alpha(\cdot\|\cdot)$ (and more generally for the class of $f$-divergences).
We discuss this result in \Cref{sec:conclusion}.

The $\alpha$-SRD \cite{MDSFT13,WWY14} is defined for $\alpha\in (0,\infty)\setminus \lbrace 1\rbrace$ and positive semidefinite operators $\rho$ and $\sigma$ as
\begin{align}
\tD_\alpha(\rho\|\sigma) &\coloneqq \begin{cases}
\frac{1}{\alpha-1}\log \left\lbrace(\tr\rho)^{-1} \tr\left[\left(\sigma^{(1-\alpha)/2\alpha} \rho \sigma^{(1-\alpha)/2\alpha} \right)^\alpha \right] \right\rbrace & \!\begin{aligned}[t] &\text{if $\supp\rho\subseteq\supp\sigma$ or} \\ & \text{($\alpha\in (0,1)$ and $\rho \not\perp\sigma$)} \end{aligned} \\[1ex]
+\infty & \text{otherwise}.
\end{cases} 
\end{align}
As before, we define $\tD_*(\cdot\|\cdot)$ for $*\in\lbrace 0,1,\infty\rbrace$ by taking the respective limits, and note that we again have $\lim_{\alpha\to 1}\tD_\alpha(\rho\|\sigma) = D(\rho\|\sigma)$.
However, in general $\tD_0(\rho\|\sigma) \neq D_0(\rho\|\sigma)$ \cite{DL14a}.
Furthermore, $\tD_\infty(\rho\|\sigma)$ coincides with the max-relative entropy $D_{\text{max}}(\rho\|\sigma)$ \cite{Dat09}.
The $\alpha$-SRD satisfies $\tD_\alpha(\rho\|\sigma)\geq 0$ for states $\rho$ and $\sigma$, and has operational interpretations as the strong converse exponent in various settings in quantum hypothesis testing \cite{MO13,CMW14,HT14} and classical-quantum channel coding \cite{MO14}. 
As proved in \cite{FL13,Bei13} (see also \cite{MDSFT13,WWY14,Hia13}), it satisfies the DPI for the range $\alpha\in [1/2,\infty)$: 
\begin{align}\label{eq:srd-dpi}
\tD_\alpha(\rho\|\sigma) \geq \tD_\alpha(\Lambda(\rho)\|\Lambda(\sigma)).
\end{align}
Moreover, there are counterexamples to \eqref{eq:srd-dpi} for the range $\alpha\in (0,1/2)$ \cite{BFT15}.

\section{Main result}\label{sec:main-result}
Our main result in this paper, \Cref{thm:eq-condition} below, is a necessary and sufficient condition for equality in the DPI \eqref{eq:srd-dpi}.
In order to state it properly, we first introduce some necessary notation and terminology.

Throughout this paper we only consider finite-dimensional Hilbert spaces.
All logarithms are taken to base 2.
For a Hilbert space $\cH$ we write $\cB(\cH)$ for the algebra of linear operators on $\cH$, and we denote by $\cP(\cH)\coloneqq \lbrace \rho\in\cB(\cH)\colon \rho\geq 0\rbrace$ and $\cD(\cH)\coloneqq \lbrace \rho\in\cP(\cH)\colon \tr\rho=1\rbrace$ the sets of positive semidefinite operators and density matrices (or quantum states), respectively.
We denote by $\rk A$ the rank of an operator $A$, and by $\supp A$ the support of $A$, i.e.~the orthogonal complement of the kernel of $A$.
We write $A \not\perp B$ if $\supp A \cap \supp B$ contains at least one non-zero vector.
For a Hermitian operator $A$, we denote by $\spec A\subseteq \mathbb{R}$ the set of eigenvalues of $A$.
For a pure state $|\psi\rangle\in\cH$ we write $\psi=|\psi\rangle\langle\psi|\in\cD(\cH)$ for the corresponding rank-$1$ density matrix.
Given a linear map $\cL\colon \cB(\cH)\to\cB(\cK)$ between Hilbert spaces $\cH$ and $\cK$, the adjoint map $\cL^\dagger \colon\cB(\cK)\to \cB(\cH)$ is the unique map satisfying $\langle \cL^\dagger(Y),X\rangle = \langle Y,\cL(X)\rangle$ for all $X\in\cB(\cH)$ and $Y\in\cB(\cK)$, where $\langle A,B\rangle\coloneqq \tr(A^\dagger B)$ is the Hilbert-Schmidt inner product.
A linear map $\Phi\colon \cB(\cH)\to\cB(\cK)$ between Hilbert spaces $\cH$ and $\cK$ is called $n$-positive, if $\id_n\ox\Phi\colon \cB(\mathbb{C}^n)\ox\cB(\cH) \to \cB(\mathbb{C}^n)\ox\cB(\cK)$ is positive, where $\id_n$ denotes the identity map on $\cB(\mathbb{C}^n)$.
A map is completely positive if it is $n$-positive for all $n\in\mathbb{N}$.
A quantum operation (or quantum channel) $\Lambda$ between Hilbert spaces $\cH$ and $\cK$ is a linear, completely positive, and trace-preserving map $\Lambda\colon \cB(\cH)\to\cB(\cK)$.

Our main result is given by the following theorem:
\begin{theorem}\label{thm:eq-condition}
Let $\alpha\in [1/2,1)\cup(1,\infty)$ and set $\gamma=(1-\alpha)/2\alpha$.
Furthermore, let $\rho\in\cD(\cH)$ and $\sigma\in\cP(\cH)$ with $\supp\rho\subseteq\supp\sigma$ if $\alpha>1$ or $\rho\not\perp\sigma$ if $\alpha < 1$, and let $\Lambda\colon \cB(\cH) \to \cB(\cK)$ be a quantum operation.
We have equality in the data processing inequality \eqref{eq:srd-dpi}, 
\begin{align}
\tD_\alpha(\rho\|\sigma) = \tD_\alpha(\Lambda(\rho)\|\Lambda(\sigma)),
\end{align}
if and only if
\begin{align}
\sigma^\gamma \left(\sigma^\gamma \rho \sigma^\gamma \right)^{\alpha-1} \sigma^\gamma = \Lambda^\dagger \!\left( \Lambda(\sigma)^\gamma \left[\Lambda(\sigma)^\gamma \Lambda(\rho) \Lambda(\sigma)^\gamma \right]^{\alpha-1} \Lambda(\sigma)^\gamma \right).
\end{align}
\end{theorem}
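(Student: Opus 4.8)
The plan is to follow the two-step route signposted above: first settle the case of a partial trace using the strict convexity/concavity built into the Frank--Lieb proof of the DPI, and then bootstrap to an arbitrary quantum operation $\Lambda$ via its Stinespring dilation. It is worth isolating first the structural reason the stated operator should appear. Writing $Q_\alpha(\rho\|\sigma)=\tr[(\sigma^\gamma\rho\sigma^\gamma)^\alpha]$, a direct differentiation gives
\begin{align}
\left.\frac{\diff}{\diff t}\right|_{t=0} Q_\alpha(\rho+tX\|\sigma) = \alpha\,\tr\!\left[X\,\sigma^\gamma\left(\sigma^\gamma\rho\sigma^\gamma\right)^{\alpha-1}\sigma^\gamma\right],
\end{align}
so that $G(\rho,\sigma):=\sigma^\gamma(\sigma^\gamma\rho\sigma^\gamma)^{\alpha-1}\sigma^\gamma$ is, up to the factor $\alpha$, the gradient of $\rho\mapsto Q_\alpha(\rho\|\sigma)$; by the chain rule the gradient of $\rho\mapsto Q_\alpha(\Lambda(\rho)\|\Lambda(\sigma))$ is $\Lambda^\dagger(G(\Lambda(\rho),\Lambda(\sigma)))$. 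The asserted condition is precisely the statement that these two gradients coincide, which is exactly the first-order (stationarity) condition one expects to fall out of saturating a convexity estimate.

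For the partial trace $\Lambda=\tr_E$ on $\cH=\cH_A\ox\cH_E$, with $\rho_A=\tr_E\rho$ and $\sigma_A=\tr_E\sigma$, recall that Frank and Lieb obtain the DPI from the (joint) convexity of $Q_\alpha$ for $\alpha>1$ and concavity for $\alpha\in[1/2,1)$; since $\tr\Lambda(\rho)=\tr\rho$ and the sign of $1/(\alpha-1)$ flips across $\alpha=1$, the DPI $\tD_\alpha(\rho\|\sigma)\ge\tD_\alpha(\rho_A\|\sigma_A)$ is equivalent to a single scalar inequality between $Q_\alpha(\rho\|\sigma)$ and $Q_\alpha(\rho_A\|\sigma_A)$. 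My plan is to revisit their argument and keep track of its equality case: because the underlying trace functional is strictly convex/concave off a degenerate set, equality forces the operator identity
\begin{align}
\sigma^\gamma\left(\sigma^\gamma\rho\sigma^\gamma\right)^{\alpha-1}\sigma^\gamma = \left(\sigma_A^\gamma\left(\sigma_A^\gamma\rho_A\sigma_A^\gamma\right)^{\alpha-1}\sigma_A^\gamma\right)\ox\one_E,
\end{align}
that is, $G(\rho,\sigma)=\Lambda^\dagger(G(\rho_A,\sigma_A))$ since $(\tr_E)^\dagger(Y)=Y\ox\one_E$. This is the theorem in the partial-trace case.

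To pass to a general $\Lambda\colon\cB(\cH)\to\cB(\cK)$, write its Stinespring dilation $\Lambda(X)=\tr_E(VXV^\dagger)$ with an isometry $V\colon\cH\to\cK\ox\cH_E$, so that $\Lambda^\dagger(Y)=V^\dagger(Y\ox\one_E)V$. Since $V^\dagger V=\one$ and $(V\sigma V^\dagger)^\gamma=V\sigma^\gamma V^\dagger$, one checks $Q_\alpha(V\rho V^\dagger\|V\sigma V^\dagger)=Q_\alpha(\rho\|\sigma)$, hence $\tD_\alpha$ is invariant under conjugation by $V$. Consequently equality in the DPI for $\Lambda$ is equivalent to equality in the DPI for $\tr_E$ applied to $\tilde{\rho}=V\rho V^\dagger$ and $\tilde{\sigma}=V\sigma V^\dagger$. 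Feeding these into the partial-trace condition and simplifying with $V^\dagger V=\one$ gives $V\,G(\rho,\sigma)\,V^\dagger=W\ox\one_E$, where $W=\Lambda(\sigma)^\gamma[\Lambda(\sigma)^\gamma\Lambda(\rho)\Lambda(\sigma)^\gamma]^{\alpha-1}\Lambda(\sigma)^\gamma=G(\Lambda(\rho),\Lambda(\sigma))$; conjugating by $V$ once more yields $G(\rho,\sigma)=V^\dagger(W\ox\one_E)V=\Lambda^\dagger(W)$, which is the claimed condition.

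I expect the main obstacle to be the rigorous extraction of the equality case in the partial-trace step: one must pin down exactly the degenerate directions of the strictly convex/concave trace functional in Frank--Lieb, show that saturation forces the full operator identity rather than merely equality of traces, and treat the concave regime $\alpha\in[1/2,1)$ and the convex regime $\alpha>1$ (as well as the endpoint $\alpha=1/2$) uniformly. A secondary, more technical point is the bookkeeping of supports throughout: the negative power $\sigma^\gamma$ for $\alpha>1$, the two support hypotheses ($\supp\rho\subseteq\supp\sigma$ versus $\rho\not\perp\sigma$), and the fact that $VV^\dagger$ is only the projection onto $\im V$, which must be handled so that the equivalence survives in both directions through the dilation.
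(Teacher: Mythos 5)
Your proposal reproduces the paper's architecture exactly --- reduce to the partial trace, then lift to a general $\Lambda$ through a Stinespring dilation --- and the parts you actually carry out are sound: the gradient identity correctly explains why $G(\rho,\sigma)=\sigma^\gamma(\sigma^\gamma\rho\sigma^\gamma)^{\alpha-1}\sigma^\gamma$ is the natural object, and your dilation bookkeeping ($\Lambda^\dagger(Y)=V^\dagger(Y\ox\one_E)V$, isometric invariance of $\tQ_\alpha$) matches the paper's passage from \Cref{prop:eq-condition-partial-trace} to \Cref{thm:eq-condition}. The genuine gap is in the step you defer, which is the entire mathematical content of the theorem: extracting the operator identity from equality in the partial-trace DPI. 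The mechanism you name for it --- that ``the underlying trace functional is strictly convex/concave off a degenerate set'' --- would fail. The functional $(\rho,\sigma)\mapsto\tQ_\alpha(\rho\|\sigma)$ is \emph{not} strictly jointly convex/concave in any sense usable here: the invariance $\tQ_\alpha(\rho\ox\tau\|\sigma\ox\tau)=\tQ_\alpha(\rho\|\sigma)$ already produces flat directions, and, more to the point, whenever the DPI is saturated nontrivially (e.g.\ $\rho_{AB}=\rho_A\ox\omega_B$, $\sigma_{AB}=\sigma_A\ox\omega_B$ with $\omega_B\neq\pi_B$), the joint-convexity step \eqref{eq:dpi-proof-ineq} is saturated by genuinely distinct pairs $(V_i\rho_{AB}V_i^\dagger,V_i\sigma_{AB}V_i^\dagger)$ in the Heisenberg--Weyl decomposition. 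So no strictness argument at the level of $\tQ_\alpha$ itself can force the conclusion.

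The idea your proposal is missing is that the strictness lives one level down, in the \emph{auxiliary variable} of the Frank--Lieb variational representation \eqref{eq:Q-as-sup}: $\tQ_\alpha(\rho\|\sigma)=\sup_{H\geq0}f_\alpha(H,\rho,\sigma)$ for $\alpha>1$ (an infimum for $\alpha\in[1/2,1)$), with
\begin{align}
f_\alpha(H,\rho,\sigma)=\alpha\tr\rho H-(\alpha-1)\tr\!\left[\left(\sigma^{-\gamma}H\sigma^{-\gamma}\right)^{\alpha/(\alpha-1)}\right]\!.
\end{align}
For fixed $(\rho,\sigma)$ the map $H\mapsto f_\alpha(H,\rho,\sigma)$ is \emph{strictly} concave (resp.\ strictly convex), since $A\mapsto\tr A^{\alpha/(\alpha-1)}$ is strictly convex and $H\mapsto\sigma^{-\gamma}H\sigma^{-\gamma}$ is linear and injective; hence the critical point $\hat H=\sigma^\gamma(\sigma^\gamma\rho\sigma^\gamma)^{\alpha-1}\sigma^\gamma$ of \eqref{eq:optimal-H} is the \emph{unique} optimizer. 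In the twirling proof, equality in joint convexity collapses the two-step chain $f_\alpha(\tH,\rho,\sigma)\leq\sumi_i\lambda_if_\alpha(\tH,\rho_i,\sigma_i)\leq\sumi_i\lambda_if_\alpha(H_i,\rho_i,\sigma_i)$ of \eqref{eq:joint-convexity-with-f}, so $\tH$ optimizes every $f_\alpha(\cdot,\rho_i,\sigma_i)$ simultaneously, and uniqueness gives $\tH=H_i$ for all $i$; writing this out (the dimension factors of $\pi_B$ cancel because $2\gamma+(\alpha-1)(2\gamma+1)=0$) is exactly the identity of \Cref{prop:eq-condition-partial-trace}. Two smaller remarks. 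First, your first-order heuristic cannot easily substitute for this argument: the minimum of $\rho\mapsto\tQ_\alpha(\rho\|\sigma)-\tQ_\alpha(\Lambda(\rho)\|\Lambda(\sigma))$ may be attained at a boundary point of the state space (non-full-rank $\rho$), where stationarity of the gradient is not available. Second, the $VV^\dagger\neq\one$ worry you raise is harmless, because the sufficiency direction never needs the dilation at all: if $G(\rho,\sigma)=\Lambda^\dagger(G(\Lambda(\rho),\Lambda(\sigma)))$, multiplying by $\rho$, taking the trace, and using $\tr[\rho\,G(\rho,\sigma)]=\tQ_\alpha(\rho\|\sigma)$ gives $\tQ_\alpha(\rho\|\sigma)=\tQ_\alpha(\Lambda(\rho)\|\Lambda(\sigma))$ directly, so only the forward (necessity) direction must pass through the dilation, and there $V^\dagger V=\one$ suffices.
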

For $\alpha > 1$ and \emph{positive} trace-preserving maps, \Cref{thm:eq-condition} was also proved using the framework of non-commutative $L_p$-spaces \cite{DJW15}.
The case of equality in the DPI for the $\alpha$-SRD was also discussed in two papers by Hiai and Mosonyi \cite{HM16} and Jenčová \cite{Jen16}, both of which focused on the aspect of sufficiency.
The connections between \Cref{thm:eq-condition} and these results are discussed in \Cref{sec:conclusion}.
The rest of this paper is organized as follows:
In \Cref{sec:proof-of-dpi}, we analyze the proof of the DPI \eqref{eq:srd-dpi} for the $\alpha$-SRD as given in \cite{FL13}, extracting a necessary and sufficient condition for equality in \eqref{eq:srd-dpi} and thus proving \Cref{thm:eq-condition}.
We present applications of \Cref{thm:eq-condition} to entanglement and distance measures in \Cref{sec:applications}.
Finally, in \Cref{sec:conclusion} we compare our result to the recoverability/sufficiency results mentioned above, and state some open questions.

\section{Proof of the main result}\label{sec:proof-of-dpi}
For the remainder of the discussion we will assume that $\rho\in\cD(\cH)$ and $\sigma\in\cP(\cH)$ with $\supp\rho\subseteq\supp\sigma$ if $\alpha>1$, or $\rho\not\perp\sigma$ if $\alpha\in[1/2,1)$.
We set $\gamma=(1-\alpha)/2\alpha$ and define the trace functional
\begin{align}
\tQ_\alpha(\rho\|\sigma)\coloneqq \tr\!\left[\left(\sigma^\gamma\rho \sigma^\gamma\right)^\alpha\right]\!,
\end{align}
which is invariant under joint unitary conjugation and tensoring with an arbitrary state as follows: For any unitary $U$ and any state $\tau$, we have
\begin{align}
\tQ_\alpha\!\left(U\rho U^\dagger\|U\sigma U^\dagger\right) &= \tQ_\alpha(\rho\|\sigma), \label{eq:TF-invariance-unitary}\\
\tQ_\alpha(\rho\ox\tau \|\sigma\ox\tau) &= \tQ_\alpha(\rho \| \sigma) .\label{eq:TF-invariance-tensor}
\end{align}
The $\alpha$-SRD can be expressed in terms of this trace functional as $\tD_\alpha(\rho\|\sigma) = \frac{1}{\alpha-1}\log \tQ_\alpha(\rho\|\sigma)$, and hence, $\tD_\alpha(\cdot\|\cdot)$ inherits the invariance properties \eqref{eq:TF-invariance-unitary} and \eqref{eq:TF-invariance-tensor} from $\tQ_\alpha(\cdot\|\cdot)$.
By virtue of the Stinespring Representation Theorem \cite{Sti55}, the DPI \eqref{eq:srd-dpi} is thus equivalent to monotonicity of the $\alpha$-SRD under partial trace:
\begin{align}\label{eq:srd-partial-trace-mon}
\tD_\alpha(\rho_{AB}\|\sigma_{AB}) \geq \tD_\alpha(\rho_A\|\sigma_A),
\end{align}
where the subscripts $AB$ and $A$ indicate the Hilbert spaces $\cH_{AB}=\cH_A\ox\cH_B$ and $\cH_A$ on which the density matrices act, and the partial trace is taken over the $B$ system.
Since the logarithm is monotonically increasing, the monotonicity of $\tD_\alpha(\cdot\|\cdot)$ under partial trace \eqref{eq:srd-partial-trace-mon} is in turn equivalent to the following monotonicity properties of $\tQ_\alpha(\cdot\|\cdot)$:
\begin{align}
\begin{aligned}
\tQ_\alpha(\rho_{AB}\|\sigma_{AB}) &\leq \tQ_\alpha(\rho_A\|\sigma_A)& &\text{for $\alpha\in[1/2,1)$},\\
\tQ_\alpha(\rho_{AB}\|\sigma_{AB}) &\geq \tQ_\alpha(\rho_A\|\sigma_A)& &\text{for $\alpha\in(1,\infty)$}.
\end{aligned}\label{eq:TF-partial-trace}
\end{align}
We set $d=\dim\cH_B$ and let $\lbrace V_i \rbrace_{i=1}^{d^2}$ be a representation of the discrete Heisenberg-Weyl group on $\cH_B$, satisfying the following relation (see e.g.~\cite{Wil13} or \cite{Wol12}):
\begin{align}
\frac{1}{d^2}\sum_i (\one_A\ox V_i) \rho_{AB} (\one_A\ox V_i^\dagger) &= \rho_A\ox \pi_B\label{eq:HW-average}
\end{align}
where $\pi_B = \one_B/d$ denotes the completely mixed state on $B$.
The crucial ingredient in proving \eqref{eq:TF-partial-trace} is then the joint concavity/convexity of the trace functional $\tQ_\alpha(\cdot\|\cdot)$:

\begin{proposition}[{\cite{FL13}}]\label{prop:TF-joint-convexity}
The functional $(\rho,\sigma)\mapsto \tQ_\alpha(\rho\|\sigma)$ is jointly concave for $\alpha\in [1/2,1)$ and jointly convex for $\alpha\in(1,\infty)$.
\end{proposition}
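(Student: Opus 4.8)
\emph{Strategy and reductions.} The goal is to show that $(\rho,\sigma)\mapsto\tQ_\alpha(\rho\|\sigma)=\tr[(\sigma^\gamma\rho\sigma^\gamma)^\alpha]$ is jointly concave for $\alpha\in[1/2,1)$ and jointly convex for $\alpha\in(1,\infty)$. I would first record two simplifying facts. Writing $M=\sigma^\gamma\rho^{1/2}$ and using $\tr[(MM^\dagger)^\alpha]=\tr[(M^\dagger M)^\alpha]$, one obtains the symmetric representations $\tQ_\alpha(\rho\|\sigma)=\tr[(\rho^{1/2}\sigma^{2\gamma}\rho^{1/2})^\alpha]=\|\sigma^\gamma\rho^{1/2}\|_{2\alpha}^{2\alpha}$, and $\tQ_\alpha$ is jointly homogeneous of degree one. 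It is then natural to view $\tQ_\alpha$ as the two-variable trace function $\tr[(\sigma^{q/2}\rho^{p}\sigma^{q/2})^{s}]$ with $p=1$, $q=2\gamma=(1-\alpha)/\alpha$ and $s=\alpha$. These exponents satisfy the boundary relation $s(p+q)=1$, with $q\in(0,1]$ in the concave regime $\alpha\in[1/2,1)$ and $q\in(-1,0)$ in the convex regime $\alpha>1$.

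\emph{The engine, and why the obvious reduction is not enough.} The deep input must be the operator-convexity theory behind Lieb's concavity theorem and Ando's convexity theorem, which control the genuine coupling of two positive variables in bilinear expressions such as $\tr[K^\dagger A^{s}K B^{t}]$ (jointly concave for $s,t\ge0$, $s+t\le1$, and jointly convex in the corresponding negative ranges). A tempting first step is to linearize only the outer power $s=\alpha$ by a Legendre-type variational formula — for $\alpha>1$, $\tr[X^\alpha]=\sup_{W\ge0}\{\alpha\tr[W^{\alpha-1}X]-(\alpha-1)\tr[W^\alpha]\}$, with the dual infimum for $\alpha<1$ — reducing matters to the joint behaviour of the single term $\tr[W^{\alpha-1}\sigma^\gamma\rho\sigma^\gamma]$. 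I would flag at once that this reduction fails on its own: the map $(\rho,\sigma)\mapsto\tr[C\sigma^\gamma\rho\sigma^\gamma]$ is, for fixed $C\ge0$ and $\gamma\neq0$, neither jointly convex nor jointly concave — already for commuting scalars it is $c\,a\,b^{2\gamma}$, whose Hessian has negative determinant and is therefore indefinite. Hence the $\rho$--$\sigma$ coupling cannot be treated piecewise and must be handled globally.

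\emph{Correct route and main obstacle.} I would therefore prove the concavity/convexity of the sandwiched functional $\tr[(\sigma^{q/2}\rho^{p}\sigma^{q/2})^{s}]$ directly, rather than through a naive linearization. The natural method is Epstein's complex-analytic argument — exhibiting the relevant one-parameter restriction as a Herglotz/operator-monotone object and continuing analytically — extended to two positive variables in the spirit of the works of Hiai and of Carlen, Frank and Lieb; the boundary relation $s(p+q)=1$ is precisely the tight case at which joint concavity for $\alpha\in[1/2,1)$ turns into joint convexity for $\alpha>1$. I expect this analytic step — simultaneously accommodating the sandwich, the outer power $\alpha$, and the full $\rho$--$\sigma$ coupling at the borderline exponents — to be the crux. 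Once the Proposition is in hand, the monotonicity \eqref{eq:TF-partial-trace} of $\tQ_\alpha$ under partial trace follows by averaging over the Heisenberg--Weyl unitaries $\{V_i\}$ as in \eqref{eq:HW-average}, together with the invariance properties \eqref{eq:TF-invariance-unitary}--\eqref{eq:TF-invariance-tensor}, and the general DPI then follows via Stinespring.
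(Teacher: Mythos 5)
Your opening reductions (the symmetric rewritings of $\tQ_\alpha$, the homogeneity, the exponent bookkeeping $s(p+q)=1$) are all correct, but the proposal goes wrong at its central claim. The assertion that the Legendre-type linearization ``fails on its own'' and that the $\rho$--$\sigma$ coupling ``must be handled globally'' is mistaken, and the route you discard is precisely the Frank--Lieb proof that this Proposition cites and that the paper recapitulates in \eqref{eq:f-alpha}--\eqref{eq:Q-as-sup}. Your scalar Hessian computation is right, but it only shows that the \emph{naive} parametrization of the variational formula (keeping $W$ as the variable, with coupling term $\tr[W^{\alpha-1}\sigma^\gamma\rho\sigma^\gamma]$) presents the optimand as a family of indefinite functions; a supremum or infimum of such a family can of course still be jointly convex or concave. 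The point of \cite{FL13} is a $\sigma$-dependent change of the variational variable that makes \emph{each member} of the family convex/concave: substitute $H=\sigma^\gamma W^{\alpha-1}\sigma^\gamma$, which for fixed invertible $\sigma$ is a bijection of the positive cone, so that $\alpha\tr[W^{\alpha-1}\sigma^\gamma\rho\sigma^\gamma]=\alpha\tr[\rho H]$ and $(\alpha-1)\tr[W^\alpha]=(\alpha-1)\tr\!\left[\left(\sigma^{-\gamma}H\sigma^{-\gamma}\right)^{\alpha/(\alpha-1)}\right]$. After this substitution the optimand $f_\alpha(H,\rho,\sigma)$ of \eqref{eq:f-alpha} has \emph{no} $\rho$--$\sigma$ coupling at all: the first term is linear in $\rho$ and independent of $\sigma$, while the second depends only on $\sigma$. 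Writing it as $\tr[(H^{1/2}\sigma^{(\alpha-1)/\alpha}H^{1/2})^{\alpha/(\alpha-1)}]$, its concavity in $\sigma$ is Epstein's theorem for the exponent $(\alpha-1)/\alpha\in(0,1)$ when $\alpha>1$, and the companion concavity statement for exponents in $[-1,0)$ proved in \cite{FL13} when $\alpha\in[1/2,1)$. Hence $f_\alpha(H,\cdot,\cdot)$ is jointly convex (resp.\ concave) for each fixed $H$, and the supremum (resp.\ infimum) in \eqref{eq:Q-as-sup} preserves joint convexity (resp.\ concavity). So the ``tempting first step'' succeeds once the variational variable is chosen correctly; the deep input is a one-variable Epstein-type theorem, not a genuinely two-variable Lieb/Ando coupling.

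The second problem is that your substitute route is never carried out. A direct two-variable Pick-function/analytic-continuation argument in the spirit of Hiai is indeed a legitimate alternative (the Remark following the Proposition points to exactly this, and to the arXiv version of \cite{AD15}), but you exhibit no Herglotz function and give no continuation argument: the step you yourself identify as ``the crux'' is only announced, so the proposal does not actually establish the Proposition. There is also a structural cost to abandoning the variational formulation even if your analytic route were completed: the paper's equality analysis (\Cref{prop:eq-condition-partial-trace} and \Cref{thm:eq-condition}) is built on the strict convexity/concavity of $H\mapsto f_\alpha(H,\rho,\sigma)$ and the explicit, unique optimizer \eqref{eq:optimal-H}, none of which a bare convexity statement provides. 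In short: repair the second paragraph by performing the change of variables $H=\sigma^\gamma W^{\alpha-1}\sigma^\gamma$, and the ``failed'' reduction becomes the proof.
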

\begin{remark}
The joint convexity/concavity of the trace functional $\tQ_\alpha(\cdot\|\cdot)$ is a special case of the joint convexity/concavity of a more general trace functional underlying the $\alpha$-z-Rényi relative entropies mentioned in \Cref{sec:intro}, which was proved by \textcite{Hia13} using the theory of Pick functions.
A more accessible proof can be found in the arXiv version of \cite{AD15}.
\end{remark}

Joint convexity/concavity of the trace functional $\tQ_\alpha(\cdot\|\cdot)$ as stated in \Cref{prop:TF-joint-convexity} can be used to prove the monotonicity of $\tQ_\alpha(\cdot\|\cdot)$ under partial trace \eqref{eq:TF-partial-trace} as follows. 
Abbreviating $V_i\equiv\one_A\ox V_i$, we have for $\alpha>1$ that
\begin{align}
\tQ_\alpha(\rho_{AB}\|\sigma_{AB}) &= \tQ_\alpha\! \left(V_i\rho_{AB} V_i^\dagger \doublemid V_i \sigma_{AB} V_i^\dagger\right) \label{eq:proof-of-TF-partial-trace-init}\\
&= \frac{1}{d^2} \sumi_i \tQ_\alpha\!\left(V_i\rho_{AB} V_i^\dagger \doublemid V_i \sigma_{AB} V_i^\dagger\right)\\
&\geq \tQ_\alpha\left(d^{-2} \sumi_i V_i\rho_{AB} V_i^\dagger \doublemid d^{-2} \sumi_i V_i \sigma_{AB} V_i^\dagger \right)\label{eq:dpi-proof-ineq}\\
&= \tQ_\alpha ( \rho_A\ox\pi_B\| \sigma_A\ox \pi_B)\\
&= \tQ_\alpha (\rho_A \| \sigma_A).\label{eq:proof-of-TF-partial-trace-final}
\end{align}
In the first equality we used the invariance of $\tQ_\alpha(\cdot\|\cdot)$ under joint unitary conjugation \eqref{eq:TF-invariance-unitary}. The inequality follows from the joint convexity of $\tQ_\alpha(\cdot\|\cdot)$ as stated in \Cref{prop:TF-joint-convexity}. In the third equality we used property \eqref{eq:HW-average} of the Heisenberg-Weyl operators, and in the last equality we used the invariance of $\tQ_\alpha(\cdot\|\cdot)$ under tensoring with a fixed state \eqref{eq:TF-invariance-tensor}.
For $\alpha\in[1/2,1)$, we go through the same steps as above to show that
\begin{align}
\tQ_\alpha(\rho_{AB}\|\sigma_{AB}) \leq \tQ_\alpha(\rho_A\|\sigma_A),
\end{align}
only this time employing the joint concavity of $\tQ_\alpha(\cdot\|\cdot)$ from \Cref{prop:TF-joint-convexity} in \eqref{eq:dpi-proof-ineq}.

To derive an equality condition for \eqref{eq:TF-partial-trace} (and hence, \eqref{eq:srd-dpi}), we take a closer look at \Cref{prop:TF-joint-convexity}.
The key ingredient in its proof in \cite{FL13} is to rewrite the trace functional $\tQ_\alpha(\rho\|\sigma)$ as follows:
Defining the function 
\begin{align}\label{eq:f-alpha}
f_\alpha(H,\rho,\sigma)\coloneqq \alpha \tr\rho H - (\alpha-1) \tr\left[\left( \sigma^{-\gamma} H \sigma^{-\gamma}\right)^{\alpha/(\alpha-1)}\right]\!,
\end{align}
it holds that
\begin{align}\label{eq:Q-as-sup}
\tQ_\alpha(\rho\|\sigma) = \begin{cases}
\inf\nolimits_{H\geq 0} f_\alpha(H,\rho,\sigma) & \text{if $\alpha\in[1/2,1)$}\\
\sup\nolimits_{H\geq 0} f_\alpha(H,\rho,\sigma) & \text{if $\alpha>1$.}\\
\end{cases}
\end{align}
The joint concavity/convexity of $\tQ_\alpha(\cdot\|\cdot)$ then follows from showing that $(\rho,\sigma)\mapsto f_\alpha(H,\rho,\sigma)$ is jointly concave/convex for fixed $H$ and the respective ranges of $\alpha$.
Moreover, in the course of proving the validity of \eqref{eq:Q-as-sup}, \textcite{FL13} show that for fixed $\rho,\sigma$ a critical point of $f_\alpha(H,\rho,\sigma)$ satisfying $\partial f_\alpha(H,\rho,\sigma)/\partial H = 0$ is given by
\begin{align}\label{eq:optimal-H}
\hat{H} = \sigma^\gamma (\sigma^\gamma \rho \sigma^\gamma )^{\alpha-1} \sigma^\gamma.
\end{align}
As $H\mapsto f_\alpha(H,\rho,\sigma)$ is concave for $\alpha>1$ and convex for $\alpha\in[1/2,1)$, the critical point $\hat{H}$ in \eqref{eq:optimal-H} is a maximum of $f_\alpha(H,\rho,\sigma)$ for $\alpha>1$ and fixed $\rho$ and $\sigma$, and a minimum of $f_\alpha(H,\rho,\sigma)$ for $\alpha\in[1/2,1)$ and fixed $\rho$ and $\sigma$.
Consequently, it holds that
\begin{align}\label{eq:TF-with-optimizer}
\tQ_\alpha(\rho,\sigma) = f_\alpha(\hat{H},\rho,\sigma)\qquad \text{for all $\alpha \in [1/2,1)\cup (1,\infty)$.}
\end{align}
In the following, we show that $H\mapsto f_\alpha(H,\rho)$ is in fact \emph{strictly} concave/convex, such that the optimizer $\hat{H}$ in \eqref{eq:optimal-H} is a \emph{unique} maximizer/minimizer.
To this end, we employ the following result, which is proved e.g.~in \cite[Thm.~2.10]{Car10}.
\begin{theorem}\label{thm:trace-functions-convexity}
Let $A$ be a Hermitian matrix with $\spec A\subseteq \cD\subseteq \mathbb{R}$, and let $g\colon \cD\to\mathbb{R}$ be a continuous, (strictly) convex function. 
Then the function $A\mapsto \tr g(A)$ is (strictly) convex.
\end{theorem}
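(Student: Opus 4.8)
The plan is to prove the statement directly from the definition of convexity, reducing the matrix inequality to two applications of the scalar Jensen inequality for $g$. Fix Hermitian matrices $A,B$ with spectra in $\cD$ and $t\in[0,1]$, set $C=tA+(1-t)B$, and let $C=\sum_i \lambda_i |e_i\rangle\langle e_i|$ be a spectral decomposition with $\{|e_i\rangle\}$ orthonormal. The starting point is the identity $\tr g(C)=\sum_i g(\lambda_i)=\sum_i g(\langle e_i|C|e_i\rangle)$, which rewrites the trace as a sum of scalar values of $g$ evaluated at the diagonal entries of $C$ in its own eigenbasis.

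First I would apply convexity of $g$ in the affine combination $\langle e_i|C|e_i\rangle=t\langle e_i|A|e_i\rangle+(1-t)\langle e_i|B|e_i\rangle$ to obtain $g(\langle e_i|C|e_i\rangle)\le t\,g(\langle e_i|A|e_i\rangle)+(1-t)\,g(\langle e_i|B|e_i\rangle)$. Next, for a fixed unit vector $|e_i\rangle$ and the spectral decomposition $A=\sum_j a_j |f_j\rangle\langle f_j|$, the expectation value $\langle e_i|A|e_i\rangle=\sum_j a_j |\langle e_i|f_j\rangle|^2$ is a genuine convex combination of the eigenvalues $a_j$, since the weights $|\langle e_i|f_j\rangle|^2$ are nonnegative and sum to one. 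A second application of Jensen then yields $g(\langle e_i|A|e_i\rangle)\le\langle e_i|g(A)|e_i\rangle$, and likewise for $B$. Summing over $i$ and using $\sum_i\langle e_i|X|e_i\rangle=\tr X$ (as $\{|e_i\rangle\}$ is an orthonormal basis) gives $\tr g(C)\le t\tr g(A)+(1-t)\tr g(B)$, which is the claimed convexity.

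For the strict case I would track equality throughout the chain for $t\in(0,1)$ and $A\neq B$. Equality in the first Jensen step forces, by strict convexity of $g$, the equalities $\langle e_i|A|e_i\rangle=\langle e_i|B|e_i\rangle$ for every $i$. Equality in the second Jensen step forces all eigenvalues $a_j$ carrying nonzero weight $|\langle e_i|f_j\rangle|^2$ to coincide, i.e.\ $|e_i\rangle$ must be an eigenvector of $A$ (and similarly of $B$) with eigenvalue equal to its expectation value. Combining the two conclusions gives $A|e_i\rangle=\langle e_i|A|e_i\rangle|e_i\rangle=\langle e_i|B|e_i\rangle|e_i\rangle=B|e_i\rangle$ for all $i$, hence $A=B$, contradicting $A\neq B$; thus strict inequality holds whenever $A\neq B$. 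I expect the main obstacle to be the bookkeeping of this second equality condition: one must argue carefully that saturating $g(\langle e_i|A|e_i\rangle)=\langle e_i|g(A)|e_i\rangle$ collapses the support of the probability vector $(|\langle e_i|f_j\rangle|^2)_j$ onto a single eigenvalue of $A$, so that $|e_i\rangle$ genuinely lies in one eigenspace — it is this step, rather than the convexity estimate itself, that delivers $A=B$.
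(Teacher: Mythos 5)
Your proposal is correct, and it is essentially the proof the paper relies on: the paper does not prove \Cref{thm:trace-functions-convexity} itself but defers to \cite[Thm.~2.10]{Car10}, whose argument is exactly your two-step reduction — expand $\tr g(C)$ in the eigenbasis of the convex combination $C=tA+(1-t)B$, apply scalar convexity of $g$, and then apply the Peierls--Jensen inequality $g(\langle e_i|A|e_i\rangle)\le\langle e_i|g(A)|e_i\rangle$. Your equality analysis for the strict case (strict convexity forces $\langle e_i|A|e_i\rangle=\langle e_i|B|e_i\rangle$ and collapses the weights $|\langle e_i|f_j\rangle|^2$ onto a single eigenspace, making each $|e_i\rangle$ a common eigenvector and hence $A=B$) likewise matches the standard treatment accompanying that reference.
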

Let us first consider $\alpha>1$.
The function $H\mapsto \tr[( \sigma^{-\gamma} H \sigma^{-\gamma})^{\alpha/(\alpha-1)}]$ is the composition of the linear function $X\mapsto \sigma^{-\gamma}X\sigma^{-\gamma}$ and the functional $A\mapsto \tr A^{\alpha/(\alpha-1)}$, the latter being strictly convex by \Cref{thm:trace-functions-convexity} upon choosing $g\colon \mathbb{R}_+\to \mathbb{R}_+,\, g(x)= x^{\alpha/(\alpha-1)}$.
As $\alpha>1$, the function $H\mapsto -(\alpha-1)\tr[( \sigma^{-\gamma} H \sigma^{-\gamma})^{\alpha/(\alpha-1)}]$ is therefore strictly concave, and hence, $f_\alpha(H,\rho,\sigma)$ is strictly concave, since it is the sum of a linear function and a strictly concave function.
In the case $\alpha\in[1/2,1)$, a similar argument shows that $f_\alpha(H,\rho,\sigma)$ is strictly convex.

We have seen in \eqref{eq:dpi-proof-ineq} above that the joint concavity/convexity of the trace functional $\tQ_\alpha(\cdot\|\cdot)$ is the only step in the proof of \eqref{eq:srd-partial-trace-mon} involving an inequality.
Let us analyze this step further.
For $\alpha>1$, we abbreviate $\rho_i=V_i\rho_{AB}V_i^\dagger$, $\sigma_i=V_i\sigma_{AB} V_i^\dagger$, and $\lambda_i=d^{-2}$ for $i=1,\dots,d^2$, such that $\rho = \sum_i\lambda_i \rho_i = \rho_A\ox\pi_B$ and $\sigma = \sum_i\lambda_i\sigma_i = \sigma_A\ox\pi_B$ by \eqref{eq:HW-average}.
We then consider the operators
\begin{align}
\tH &\coloneqq \argmax\nolimits_H f_\alpha(H,\rho,\sigma) &
H_i &\coloneqq \argmax\nolimits_H f_\alpha(H, \rho_i,\sigma_i),
\end{align}
which are well-defined by the preceding discussion.
Step \eqref{eq:dpi-proof-ineq} above can now be written as
\begin{align}\label{eq:joint-convexity-with-f}
\tQ_\alpha(\rho\|\sigma) = f_\alpha\!\left(\tH,\rho,\sigma\right) \leq \sumi_i \lambda_i f_\alpha\!\left(\tH, \rho_i,\sigma_i\right) \leq \sumi_i \lambda_i f_\alpha\!\left(H_i,\rho_i,\sigma_i\right) = \sumi_i \lambda_i \tQ_\alpha(\rho_i\|\sigma_i).
\end{align}
Assume now that we have equality in the joint convexity, that is, $\tQ_\alpha(\rho\|\sigma) = \sumi_i \lambda_i \tQ_\alpha(\rho_i\|\sigma_i)$.
Then the chain of inequalities in \eqref{eq:joint-convexity-with-f} collapses, and in particular we obtain 
\begin{align}
f_\alpha\!\left(\tH, \rho_i,\sigma_i\right) = f_\alpha\!\left(H_i,\rho_i,\sigma_i\right)\quad \text{for every $i=1,\dots,d^2$}.
\end{align}
In other words, the operator $\tH$ maximizes $f_\alpha(H,\rho_i,\sigma_i)$ for every $i=1,\dots,d^2$, and since the maximizing element of $f_\alpha(H,\rho_i,\sigma_i)$ is unique, we obtain $\tH = H_i$ for every $i=1,\dots,d^2$.
In the case $\alpha\in [1/2,1)$, we define $\tH \coloneqq \argmin\nolimits_H f_\alpha(H,\rho,\sigma)$ and $H_i \coloneqq \argmin\nolimits_H f_\alpha(H, \rho_i,\sigma_i)$. 
The inequalities in \eqref{eq:joint-convexity-with-f} are now reversed due to the joint concavity of $\tQ_\alpha(\cdot\|\cdot)$, and since $f_\alpha(\cdot,\rho_i,\sigma_i)$ attains a \emph{minimum} at $H_i$. 
Again, we obtain $\tH = H_i$ for every $i=1,\dots,d^2$.

Using the explicit form of the optimal $\hat{H}$ from \eqref{eq:optimal-H}, we can write out the condition $\tH= H_i$ with the choices for $\rho_i$, $\sigma_i$, and $\lambda_i$ made above, obtaining for every $i=1,\dots,d^2$
\begin{align}\label{eq:Hbar-Hi}
\sigma_A^\gamma \left(\sigma_A^\gamma \rho_A \sigma_A^\gamma\right)^{\alpha-1} \sigma_A^\gamma \ox \pi_B^{2\gamma + (\alpha-1)(2\gamma+1)} = (\one_A\ox V_i) \sigma_{AB}^\gamma \left(\sigma_{AB}^\gamma \rho_{AB} \sigma_{AB}^\gamma\right)^{\alpha-1} \sigma_{AB}^\gamma \left(\one_A\ox V_i^\dagger\right)\!.
\end{align}
Since $2\gamma + (\alpha-1)(2\gamma + 1) = 0$, the dimension factor of $\pi_B$ cancels, and eliminating the unitary $V_i$ in \eqref{eq:Hbar-Hi} yields
\begin{align}\label{eq:eq-condition}
\sigma_A^\gamma \left(\sigma_A^\gamma \rho_A \sigma_A^\gamma \right)^{\alpha-1} \sigma_A^\gamma \ox \one_B = \sigma_{AB}^\gamma \left(\sigma_{AB}^\gamma \rho_{AB} \sigma_{AB}^\gamma \right)^{\alpha-1} \sigma_{AB}^\gamma.
\end{align}
This is a necessary condition for equality in the monotonicity of the trace functional $\tQ_\alpha(\cdot\|\cdot)$.
Furthermore, it is easy to see that \eqref{eq:eq-condition} is also sufficient, as $\tQ_\alpha(\rho_{AB}\|\sigma_{AB}) = \tQ_\alpha(\rho_A\|\sigma_B)$ follows from multiplying \eqref{eq:eq-condition} by $\rho_{AB}$, taking the trace, and using cyclicity of the trace.
In summary, we have therefore proved the following:

\begin{proposition}\label{prop:eq-condition-partial-trace}
Let $\alpha\in [1/2,1)\cup(1,\infty)$, then we have $\tD_\alpha(\rho_{AB}\|\sigma_{AB}) = \tD_\alpha(\rho_A\|\sigma_A)$ if and only if
\begin{align}
\sigma_A^\gamma \left(\sigma_A^\gamma \rho_A \sigma_A^\gamma \right)^{\alpha-1} \sigma_A^\gamma \ox \one_B = \sigma_{AB}^\gamma \left(\sigma_{AB}^\gamma \rho_{AB} \sigma_{AB}^\gamma \right)^{\alpha-1} \sigma_{AB}^\gamma.
\end{align}
\end{proposition}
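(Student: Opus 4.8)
The plan is to extract the equality condition from the single inequality step \eqref{eq:dpi-proof-ineq} in the partial-trace monotonicity argument, namely the joint convexity/concavity of $\tQ_\alpha(\cdot\|\cdot)$ applied to the Heisenberg-Weyl decomposition with $\rho_i = V_i \rho_{AB} V_i^\dagger$ and $\sigma_i = V_i\sigma_{AB}V_i^\dagger$ (where $V_i \equiv \one_A\ox V_i$) and weights $\lambda_i = d^{-2}$, so that $\sumi_i \lambda_i \rho_i = \rho_A \ox \pi_B$ and $\sumi_i \lambda_i \sigma_i = \sigma_A\ox\pi_B$ by \eqref{eq:HW-average}. I would treat the two directions separately.

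For necessity, I would first upgrade the variational formula \eqref{eq:Q-as-sup} to \emph{uniqueness} of the optimizer. Using \Cref{thm:trace-functions-convexity} with the strictly convex generator $g(x) = x^{\alpha/(\alpha-1)}$ and composing with the linear map $X\mapsto \sigma^{-\gamma}X\sigma^{-\gamma}$, the map $H\mapsto f_\alpha(H,\rho,\sigma)$ is strictly concave for $\alpha>1$ and strictly convex for $\alpha\in[1/2,1)$, so that the critical point $\hat H$ from \eqref{eq:optimal-H} is the unique optimizer. Rewriting \eqref{eq:dpi-proof-ineq} as the chain \eqref{eq:joint-convexity-with-f}, the hypothesis $\tQ_\alpha(\rho_{AB}\|\sigma_{AB}) = \tQ_\alpha(\rho_A\|\sigma_A)$ forces every inequality in that chain to be an equality, hence $f_\alpha(\tH,\rho_i,\sigma_i) = f_\alpha(H_i,\rho_i,\sigma_i)$ for each $i$, where $\tH$ and $H_i$ are the unique optimizers for $(\rho_A\ox\pi_B,\sigma_A\ox\pi_B)$ and $(\rho_i,\sigma_i)$, respectively. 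By uniqueness this yields $\tH = H_i$ for all $i$.

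The next step is to make $\tH = H_i$ explicit via \eqref{eq:optimal-H}. Expanding $\hat H$ on the product state $(\rho_A\ox\pi_B,\sigma_A\ox\pi_B)$ factorizes the $B$-part into a single power $\pi_B^{2\gamma+(\alpha-1)(2\gamma+1)}$, while the $i$-th optimizer is $H_i = (\one_A\ox V_i)\,\sigma_{AB}^\gamma(\sigma_{AB}^\gamma\rho_{AB}\sigma_{AB}^\gamma)^{\alpha-1}\sigma_{AB}^\gamma\,(\one_A\ox V_i^\dagger)$, giving \eqref{eq:Hbar-Hi}. The arithmetic identity $2\gamma + (\alpha-1)(2\gamma+1) = 0$ (which holds since $2\gamma+1 = 1/\alpha$) collapses the $\pi_B$-factor to $\one_B$, and because the $B$-component is then exactly $\one_B$, conjugation by $\one_A\ox V_i$ acts trivially and may be removed, producing precisely \eqref{eq:eq-condition}. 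For sufficiency I would argue by direct computation: multiplying \eqref{eq:eq-condition} by $\rho_{AB}$ and taking the trace, cyclicity collapses the right-hand side to $\tr[(\sigma_{AB}^\gamma\rho_{AB}\sigma_{AB}^\gamma)^\alpha] = \tQ_\alpha(\rho_{AB}\|\sigma_{AB})$, while tracing out $B$ against $\one_B$ on the left collapses to $\tr[(\sigma_A^\gamma\rho_A\sigma_A^\gamma)^\alpha] = \tQ_\alpha(\rho_A\|\sigma_A)$; since $\tD_\alpha = \frac{1}{\alpha-1}\log\tQ_\alpha$, equality of the two trace functionals is equivalent to equality of the divergences.

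I expect the main obstacle to be the \emph{strictness} of the convexity/concavity of $f_\alpha(\cdot,\rho,\sigma)$, which is exactly what promotes the statement ``$\tH$ is an optimizer of each $f_\alpha(\cdot,\rho_i,\sigma_i)$'' to the pointwise identity $\tH = H_i$. Without strictness one could only conclude that $\tH$ lies in a possibly larger set of optimizers, and the clean operator equation \eqref{eq:eq-condition} would not follow. The delicate points are therefore (i) checking that the generator $g(x)=x^{\alpha/(\alpha-1)}$ is genuinely strictly convex on the relevant domain for both parameter ranges, and (ii) ensuring that the linear preprocessing $X\mapsto\sigma^{-\gamma}X\sigma^{-\gamma}$ is injective on the support of $\sigma$, so that strict convexity is preserved under composition.
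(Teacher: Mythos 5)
Your proposal follows essentially the same route as the paper's own proof: the Heisenberg--Weyl averaging, the upgrade of \Cref{prop:TF-joint-convexity} to strictness via \Cref{thm:trace-functions-convexity}, the collapse of the chain \eqref{eq:joint-convexity-with-f} to force $\tH = H_i$ by uniqueness of the optimizer, the cancellation $2\gamma+(\alpha-1)(2\gamma+1)=0$, and sufficiency by multiplying \eqref{eq:eq-condition} by $\rho_{AB}$ and taking the trace. The two ``delicate points'' you flag are real but harmless: $g(x)=x^{\alpha/(\alpha-1)}$ has exponent outside $[0,1]$ in both parameter ranges (so it is strictly convex on $\mathbb{R}_+$), and $X\mapsto\sigma^{-\gamma}X\sigma^{-\gamma}$ is injective on the support of $\sigma$ since $\sigma^{-\gamma}$ is invertible there.
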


We are now in a position to prove our main result, \Cref{thm:eq-condition}:

\begin{proof}[Proof of \Cref{thm:eq-condition}]
For the quantum operation $\Lambda\colon \cB(\cH)\to\cB(\cK)$, the Stinespring Representation Theorem \cite{Sti55} asserts that there is a Hilbert space $\cH'$, a pure state $|\tau\rangle\in\cH'\ox\cK$, and a unitary $U$ acting on $\cH\ox\cH'\ox\cK$ such that for every $\rho\in\cB(\cH)$ we have
\begin{align}
\Lambda(\rho) = \tr_{12}\left(U(\rho\ox\tau)U^\dagger\right)\!,
\end{align}
where $\tr_{12}$ denotes the partial trace over $\cH$ and $\cH'$, that is, the first two factors of $\cH\ox\cH'\ox\cK$.
We then have
\begin{align}
\tD_\alpha(\rho\|\sigma) &= \tD_\alpha\!\left(U(\rho\ox\tau)U^\dagger \doublemid U(\sigma\ox\tau)U^\dagger \right)\\
&\geq \tD_\alpha\!\left(\tr_{12}\left(U(\rho\ox\tau)U^\dagger\right) \doublemid \tr_{12}\left(U(\sigma\ox\tau)U^\dagger\right)\right)\\
&= \tD_\alpha(\Lambda(\rho)\|\Lambda(\sigma)),
\end{align}
where the first line follows from \eqref{eq:TF-invariance-unitary} and \eqref{eq:TF-invariance-tensor}, and the inequality follows from \eqref{eq:srd-partial-trace-mon}.
By \Cref{prop:eq-condition-partial-trace} we have equality in the second line if and only if
\begin{multline}
\left( U(\sigma\ox\tau)U^\dagger\right)^{\gamma} \left[\left( U(\sigma\ox\tau)U^\dagger\right)^\gamma U(\rho\ox\tau)U^\dagger \left( U(\sigma\ox\tau)U^\dagger\right)^\gamma\right]^{\alpha-1} \left( U(\sigma\ox\tau)U^\dagger\right)^{\gamma} \\
 = \one_{\cH\ox\cH'} \ox \Lambda(\sigma)^{\gamma}\left[\Lambda(\sigma)^\gamma \Lambda(\rho)\Lambda(\sigma)^\gamma\right]^{\alpha-1} \Lambda(\sigma)^\gamma.\label{eq:tr-12}
\end{multline}
Using the fact that $f(UXU^\dagger)=Uf(X)U^\dagger$ for every function $f$ and unitary $U$, this is equivalent to
\begin{align}
U\left(\sigma^{\gamma}\left(\sigma^\gamma \rho \sigma^\gamma \right)^{\alpha-1} \sigma^\gamma \ox \tau \right) U^\dagger  =  \one_{\cH\ox\cH'} \ox \Lambda(\sigma)^{\gamma}\left[\Lambda(\sigma)^\gamma \Lambda(\rho)\Lambda(\sigma)^\gamma\right]^{\alpha-1} \Lambda(\sigma)^\gamma.
\end{align}
The theorem now follows from the fact that the adjoint of $\Lambda$ is given by $\Lambda^\dagger(\omega) = V^\dagger(\one_{\cH\ox\cH'} \ox \omega)V$, where $V = U(\one_\cH\ox |\tau\rangle)$ is the Stinespring isometry of $\Lambda$ satisfying $V^\dagger V = \one_{\cH}$.
\end{proof}

\section{Applications}\label{sec:applications}
In this section we discuss applications of \Cref{thm:eq-condition}.
Our goal is to generalize a set of results by Carlen and Lieb \cite{CL12} about the Araki-Lieb inequality and entanglement of formation by proving the corresponding results for Rényi quantities.
In \Cref{sec:renyi-araki-lieb} we state a Rényi version of the Araki-Lieb inequality (\Cref{prop:renyi-araki-lieb}), and analyze the case of equality (\Cref{thm:equality-in-renyi-araki-lieb}).
In \Cref{sec:EoF} we first prove a general lower bound on the Rényi entanglement of formation (analogous to the corresponding bound on the entanglement of formation in \cite{CL12}), and then use the results from \Cref{sec:renyi-araki-lieb} to show that this lower bound is achieved by states saturating the Rényi version of the Araki-Lieb inequality.
These results are presented in \Cref{thm:renyi-EoF}.
Finally, in \Cref{sec:entanglement-fidelity} we discuss the case of equality in a well-known upper bound on the entanglement fidelity in terms of the usual fidelity, which we state in \Cref{prop:entanglement-fidelity-equality}.

We start with a few definitions.
For $\rho\in\cD(\cH)$ the \emph{von Neumann entropy} $S(\rho)$ is defined as $S(\rho)\coloneqq -\tr (\rho\log\rho) = -D(\rho\|\one)$, and we write $S(A)_\rho\equiv S(\rho_A)$ for a state $\rho_A$ acting on a Hilbert space $\cH_A$.
The \emph{conditional entropy} $S(A|B)_\rho$ is defined as $S(A|B)_\rho \coloneqq S(AB)_\rho - S(B)_\rho = -D(\rho_{AB}\|\one_A\ox\rho_B)$.
In our discussion we consider the following Rényi generalization of the conditional entropy, first defined in \cite{MDSFT13}: 
For $\alpha\in [1/2,\infty)$, the \emph{$\alpha$-Rényi conditional entropy} of a bipartite state $\rho_{AB}$ is defined as
\begin{align}
\tS_\alpha(A|B)_\rho \coloneqq -\min_{\sigma_B} \tD_\alpha(\rho_{AB}\|\one_A\ox\sigma_B),
\end{align}
where the minimization is over states $\sigma_B$, and we set $\tS_1(A|B)_\rho\coloneqq \lim_{\alpha\to 1}\tS_\alpha(A|B)_\rho = S(A|B)_\rho$.
The $\alpha$-Rényi conditional entropy satisfies the following duality relation:
\begin{proposition}[{\cite{MDSFT13,Bei13}}]\label{prop:renyi-conditional-duality}
Let $\rho_{ABC}$ be a pure state with marginals $\rho_{AB}$ and $\rho_{AC}$. 
For $\alpha,\beta\in [1/2,\infty)$ such that $\frac{1}{\alpha} + \frac{1}{\beta} = 2$, we have
\begin{align}
\tS_\alpha(A|B)_\rho = -\tS_\beta(A|C)_\rho.
\end{align}
\end{proposition}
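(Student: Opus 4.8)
The plan is to reduce the claim to a statement about the trace functional $\tQ_\alpha$, rewrite the conditional entropies as Schatten-norm optimizations, and then combine a purification with Schatten--H\"older duality; the relation $\tfrac1\alpha+\tfrac1\beta=2$ will emerge precisely as the conjugacy of the relevant Schatten exponents.

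First I would unfold the definition. Writing $\gamma=(1-\alpha)/2\alpha$ and $T_{\sigma_B}\coloneqq(\one_A\ox\sigma_B^{\gamma})\rho_{AB}^{1/2}$, one has $\tQ_\alpha(\rho_{AB}\|\one_A\ox\sigma_B)=\|T_{\sigma_B}\|_{2\alpha}^{2\alpha}$, since $T_{\sigma_B}T_{\sigma_B}^\dagger=(\one_A\ox\sigma_B^{\gamma})\rho_{AB}(\one_A\ox\sigma_B^{\gamma})$. Because $\tD_\alpha(\rho_{AB}\|\one_A\ox\sigma_B)=\tfrac{1}{\alpha-1}\log\tQ_\alpha(\rho_{AB}\|\one_A\ox\sigma_B)$, the optimization defining $\tS_\alpha(A|B)_\rho$ becomes, for $\alpha>1$,
\begin{align*}
\tS_\alpha(A|B)_\rho=\frac{2\alpha}{1-\alpha}\log\min_{\sigma_B}\big\|(\one_A\ox\sigma_B^{\gamma})\rho_{AB}^{1/2}\big\|_{2\alpha},
\end{align*}
with the minimum replaced by a maximum for $\alpha\in[1/2,1)$ (the prefactor $2\alpha/(1-\alpha)$ changes sign accordingly). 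The analogous formula holds for $\tS_\beta(A|C)_\rho$ with exponent $2\beta$ and $\gamma_\beta=(1-\beta)/2\beta$. The bookkeeping to keep in mind is that $\tfrac1\alpha+\tfrac1\beta=2$ is equivalent both to $\gamma_\alpha=-\gamma_\beta$ and, crucially, to $2\beta$ being the H\"older conjugate of $2\alpha$, i.e.\ $\tfrac{1}{2\alpha}+\tfrac{1}{2\beta}=1$.

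The heart of the argument is then a pure-state reshaping step. Schatten--H\"older duality expresses $\|T_{\sigma_B}\|_{2\alpha}$ as an optimization of $|\tr(T_{\sigma_B}^\dagger S)|$ over operators $S$ with $\|S\|_{2\beta}\le 1$, thereby introducing a second optimization at the conjugate exponent $2\beta$. I would then invoke the purification $|\psi\rangle_{ABC}$: up to a partial isometry on $\cH_C$ that the unitarily invariant norm absorbs, $\rho_{AB}^{1/2}$ is the reshaping of $|\psi\rangle$ across the $AB|C$ cut, and $\rho_{AC}^{1/2}$ the reshaping across the $AC|B$ cut. Transporting the dual variable $S$ through these two reshapings rewrites $\tr(T_{\sigma_B}^\dagger S)$ as a trace on $\cH_{AC}$ in which $\sigma_B^{\gamma_\alpha}$ is converted into $\sigma_C^{\gamma_\beta}$ for a suitable state $\sigma_C$ on $C$. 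Matching the two optimizations, the combined optimization over $(\sigma_B,S)$ collapses to the single optimization over $\sigma_C$ defining $\tS_\beta(A|C)_\rho$; tracking the prefactors $1/(\alpha-1)$ and $1/(\beta-1)$, which carry opposite signs since $\alpha$ and $\beta$ lie on opposite sides of $1$, produces the overall minus sign.

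The main obstacle is this reshaping-and-matching step: one must justify interchanging the optimization over the state $\sigma_B$ with the Schatten dual optimization over $S$, which have opposite optimization senses, so that they fuse into the single optimization over $\sigma_C$. I would carry this out with Sion's minimax theorem, whose hypotheses are supplied by the joint convexity/concavity of $\tQ_\alpha(\cdot\|\cdot)$ from \Cref{prop:TF-joint-convexity} together with compactness of the set of states and continuity in the relevant parameters. Finally I would handle the support hypotheses ($\supp\rho\subseteq\supp\sigma$ for $\alpha>1$, and $\rho\not\perp\sigma$ for $\alpha<1$), which guarantee the inverse powers of $\sigma_B,\sigma_C$ are well defined on the appropriate supports, and dispose of the boundary and limiting values $\alpha\to1$, $\alpha\to\tfrac12$, and $\alpha\to\infty$ by continuity of $\tS_\alpha$ in $\alpha$.
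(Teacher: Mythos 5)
Your setup is sound: the rewriting $\tQ_\alpha(\rho_{AB}\|\one_A\ox\sigma_B)=\|(\one_A\ox\sigma_B^{\gamma_\alpha})\rho_{AB}^{1/2}\|_{2\alpha}^{2\alpha}$, the sign bookkeeping, and the observation that $\tfrac1\alpha+\tfrac1\beta=2$ means both $\gamma_\alpha=-\gamma_\beta$ and $\tfrac{1}{2\alpha}+\tfrac{1}{2\beta}=1$ (with $\gamma_\alpha=(1-\alpha)/2\alpha$, $\gamma_\beta=(1-\beta)/2\beta$) are all correct, and this correctly reduces \Cref{prop:renyi-conditional-duality} to the identity $\min_{\sigma_B}\|(\one_A\ox\sigma_B^{\gamma_\alpha})\rho_{AB}^{1/2}\|_{2\alpha}=\max_{\sigma_C}\|(\one_A\ox\sigma_C^{\gamma_\beta})\rho_{AC}^{1/2}\|_{2\beta}$. (Note the paper itself gives no proof of this proposition; it imports it from \cite{MDSFT13,Bei13}, whose proofs share this skeleton.) However, the heart of your argument is a placeholder. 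The step in which the Hölder dual variable $S$ is ``transported through the reshapings'' so that the pair $(\sigma_B,S)$ ``collapses'' to a single state $\sigma_C$ is precisely the hard direction of the duality, and nothing in the proposal performs it: the feasible set $\lbrace S\colon\|S\|_{2\beta}\le1\rbrace$ is far larger than the set of reshaped, normalized operators of the form $(\one_A\ox\sigma_C^{\gamma_\beta})\rho_{AC}^{1/2}$, and the Hölder-saturating witness for $T_{\sigma_B}$, namely $S^*=U|T_{\sigma_B}|^{2\alpha-1}/\|T_{\sigma_B}\|_{2\alpha}^{2\alpha-1}$ (with $T_{\sigma_B}=U|T_{\sigma_B}|$ a polar decomposition), bears no evident relation to any state on $C$ unless one already has the saddle-point optimality conditions one is trying to prove.

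The minimax step is also unsound as stated. Sion's theorem needs (quasi)concavity in the maximization variable and (quasi)convexity in the minimization variable. Your objective $|\tr(T_{\sigma_B}^\dagger S)|$ is convex, not concave, in $S$; replacing the modulus by a real part makes it linear in $S$, but then $\sigma_B\mapsto\mathrm{Re}\tr\big(S^\dagger(\one_A\ox\sigma_B^{\gamma_\alpha})\rho_{AB}^{1/2}\big)$ need not be convex, since operator convexity of $x\mapsto x^{\gamma_\alpha}$ gives convexity only after tracing against a \emph{positive} operator, and $\rho_{AB}^{1/2}S^\dagger$ is not positive for a generic witness. The joint concavity/convexity of $\tQ_\alpha$ (\Cref{prop:TF-joint-convexity}), which you invoke to feed Sion, is convexity in the pair $(\rho,\sigma)$ and is not the required hypothesis. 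The cited proofs evade both problems by applying Sion to a different function: $F(\sigma_B,\sigma_C)=\tr\big[\rho_{BC}\big(\sigma_B^{(1-\alpha)/\alpha}\ox\sigma_C^{(1-\beta)/\beta}\big)\big]$, which is convex in $\sigma_B$ and concave in $\sigma_C$ because $x^p$ is operator convex for $p\in[-1,0]$ and operator concave for $p\in[0,1]$. The inner optimizations are solved in closed form, $\min_{\sigma_B}\tr\big[P\sigma_B^{(1-\alpha)/\alpha}\big]=(\tr P^\beta)^{1/\beta}$ and $\max_{\sigma_C}\tr\big[Q\sigma_C^{(1-\beta)/\beta}\big]=(\tr Q^\alpha)^{1/\alpha}$, and the operators $P=\tr_C\big[\rho_{BC}(\one_B\ox\sigma_C^{(1-\beta)/\beta})\big]$, $Q=\tr_B\big[\rho_{BC}(\sigma_B^{(1-\alpha)/\alpha}\ox\one_C)\big]$ are converted into the desired $AC$- and $AB$-quantities by the purification: $P$ and $(\one_A\ox\sigma_C^{\gamma_\beta})\rho_{AC}(\one_A\ox\sigma_C^{\gamma_\beta})$ are the two marginals of the vector $(\one_{AB}\ox\sigma_C^{\gamma_\beta})|\rho\rangle_{ABC}$ across complementary cuts, hence share their nonzero spectrum (and analogously for $Q$). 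If you rebuild the argument around this function $F$, your plan goes through; as written, its central step is unproved and its minimax invocation would fail.
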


\subsection{Rényi version of Araki-Lieb inequality and the case of equality}\label{sec:renyi-araki-lieb}
The Araki-Lieb inequality \cite{AL70} states that for every bipartite state $\rho_{AB}$,
\begin{align}\label{eq:araki-lieb}
S(AB)_\rho \geq \left| S(A)_\rho - S(B)_\rho\right|.
\end{align}
There are a few different characterizations for the case of equality in the Araki-Lieb inequality \cite{NC00,ZW11,CL12}.
Here, we concentrate on a result by Carlen and Lieb:
\begin{theorem}[Equality in the Araki-Lieb inequality; {\cite{CL12}}]\label{thm:equality-in-araki-lieb}
For a bipartite state $\rho_{AB}$ denote by $r_{AB}$, $r_A$, and $r_B$ the ranks of $\rho_{AB}$, $\rho_A$, and $\rho_B$, respectively.
The state $\rho_{AB}$ saturates the Araki-Lieb inequality \eqref{eq:araki-lieb}, 
\begin{align}
S(AB)_\rho = S(B)_\rho - S(A)_\rho,
\end{align} 
if and only if the following conditions are satisfied:
\begin{enumerate}[{\normalfont (i)}]
\item $r_B = r_A r_{AB}$
\item The state $\rho_{AB}$ has a spectral decomposition of the form
\begin{align}
\rho_{AB} = \sum_{i=1}^{r_{AB}} \lambda_i |i\rangle\langle i|_{AB},
\end{align}
where the vectors $\lbrace |i\rangle_{AB}\rbrace_{i=1}^{r_{AB}}$ are such that $\tr_B|i\rangle \langle j|_{AB} = \delta_{ij}\rho_A$ for $i,j=1,\dots,r_{AB}$.
\end{enumerate}
\end{theorem}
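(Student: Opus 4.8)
The plan is to pass to a purification and reduce the Araki--Lieb equality to the equality case of \emph{subadditivity}, whose saturation is governed by a product structure that can then be unpacked. Let $|\psi\rangle_{ABC}$ be a purification of $\rho_{AB}$, so that $\rho_{ABC}=|\psi\rangle\langle\psi|_{ABC}$ is pure. For a pure tripartite state the entropies of complementary subsystems coincide, which gives the identities $S(AB)_\rho = S(C)_\rho$ and $S(B)_\rho = S(AC)_\rho$ and $S(A)_\rho = S(BC)_\rho$. (Equivalently, one may phrase the argument through the $\alpha=1$ case of the conditional-entropy duality in \Cref{prop:renyi-conditional-duality}, namely $S(A|B)_\rho = -S(A|C)_\rho$; I will use the complementary-entropy identities directly as they are more economical here.)

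First I would rewrite the equality $S(AB)_\rho = S(B)_\rho - S(A)_\rho$ using $S(AB)_\rho=S(C)_\rho$ and $S(B)_\rho=S(AC)_\rho$, turning it into $S(C)_\rho = S(AC)_\rho - S(A)_\rho$, that is, $S(AC)_\rho = S(A)_\rho + S(C)_\rho$. Thus the Araki--Lieb inequality is nothing but the subadditivity inequality $S(AC)_\rho \le S(A)_\rho + S(C)_\rho$ for the marginal $\rho_{AC}$, and its saturation is equivalent to saturating subadditivity. The equality case of subadditivity is classical: it is the vanishing of the mutual information $I(A:C)_\rho$, i.e. equality in the data processing inequality for the relative entropy at $\alpha=1$, and it holds if and only if $\rho_{AC} = \rho_A\ox\rho_C$.

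It then remains to translate the product condition $\rho_{AC}=\rho_A\ox\rho_C$ back into a statement about $\rho_{AB}$. Writing the Schmidt decomposition of $|\psi\rangle_{ABC}$ across the cut $AB\,|\,C$ as $|\psi\rangle_{ABC}=\sum_{i=1}^{r_{AB}}\sqrt{\lambda_i}\,|i\rangle_{AB}\,|i\rangle_C$, where $\{\lambda_i\}$ are the nonzero eigenvalues of $\rho_{AB}$ and $\{|i\rangle_C\}$ is orthonormal, a direct computation gives
\[
\rho_{AC} = \sum_{i,j}\sqrt{\lambda_i\lambda_j}\,\bigl(\tr_B |i\rangle\langle j|_{AB}\bigr)\ox |i\rangle\langle j|_C .
\]
Since $\rho_C$ is complementary to $\rho_{AB}$ it carries the same spectrum $\{\lambda_i\}$, so $\rho_A\ox\rho_C = \sum_i\lambda_i\,\rho_A\ox |i\rangle\langle i|_C$. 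Matching blocks in the orthonormal $C$-basis forces the off-diagonal operators $\tr_B|i\rangle\langle j|_{AB}$ to vanish for $i\neq j$ and the diagonal ones to equal $\rho_A$, i.e. $\tr_B|i\rangle\langle j|_{AB}=\delta_{ij}\rho_A$, which is condition (ii). Condition (i) then follows by counting ranks: $r_C = r_{AB}$, and since $B$ and $AC$ are complementary, $r_B = \rk\rho_{AC} = \rk(\rho_A\ox\rho_C) = r_A\,r_C = r_A\,r_{AB}$. For the converse I would run the same computation backwards: assuming (i) and (ii), the displayed formula collapses to $\rho_{AC}=\rho_A\ox\rho_C$, hence subadditivity is saturated, and the complementary-entropy identities return equality in the Araki--Lieb inequality.

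The main obstacle is the final translation step: one must match the two expressions for $\rho_{AC}$ block by block and keep the support and rank bookkeeping correct, arguing that positivity of $\rho_{AC}$ together with the product form pins down every block $\tr_B|i\rangle\langle j|_{AB}$ exactly, and that the identification of the Schmidt vectors is not spoiled by eigenvalue degeneracies of $\rho_{AB}$ (so that the freedom in choosing the $|i\rangle_{AB}$ within degenerate eigenspaces is compatible with the conclusion). Everything else is a routine consequence of the purification picture and the known equality case of subadditivity.
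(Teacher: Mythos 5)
Your proof is correct, and it follows essentially the same route as the paper (which itself only cites Carlen--Lieb for this statement but mimics exactly this argument for its Rényi generalization in \Cref{thm:equality-in-renyi-araki-lieb}): purification plus complementary-entropy/duality reduces the equality $S(AB)_\rho = S(B)_\rho - S(A)_\rho$ to the product condition $\rho_{AC}=\rho_A\ox\rho_C$, and the block-matching of $\rho_{AC}$ in the Schmidt basis across the $AB|C$ cut is precisely the Carlen--Lieb translation step yielding conditions (i) and (ii). The degeneracy worry you raise at the end is harmless, since the theorem only asserts the \emph{existence} of a spectral decomposition with the stated property, and the Schmidt basis provides one.
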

We can regard the Araki-Lieb inequality \eqref{eq:araki-lieb} as lower bounds on the conditional entropies:
\begin{align}
S(A|B)_\rho &\geq - S(A)_\rho & S(B|A)_\rho &\geq - S(B)_\rho.\label{eq:lower-bound-on-cond-entropy}
\end{align}
In the following, we only focus on the bound $S(A|B)_\rho \geq - S(A)_\rho$, noting that all the results we obtain hold for $S(B|A)_\rho$ in an analogous manner.
The formulation \eqref{eq:lower-bound-on-cond-entropy} of the Araki-Lieb inequality admits a simple proof based on duality as follows.
With a purification $|\rho\rangle_{ABC}$ of $\rho_{AB}$, we have
\begin{align}
S(A|B)_\rho &= - S(A|C)_\rho = D(\rho_{AC}\|\one_A\ox\rho_C) \geq D(\rho_A\|\one_A) = -S(A)_\rho,\label{eq:araki-lieb-conditional-proof}
\end{align}
where the first equality follows from duality for the conditional entropy, and the inequality follows from the DPI for the QRE with $\Lambda = \tr_C$.

The advantage of phrasing the Araki-Lieb inequality in the form of \eqref{eq:lower-bound-on-cond-entropy} is that we can easily generalize it to Rényi quantities.
To this end, we simply replace the von Neumann quantities in \eqref{eq:araki-lieb-conditional-proof} by the Rényi conditional entropy and the Rényi entropy, defined as 
\begin{align}
S_\alpha(A)_\rho = -\tD_\alpha(\rho_A\|\one_A) = \frac{1}{1-\alpha} \log\tr \rho_A^\alpha.\label{eq:renyi-entropy}
\end{align} 
With $\alpha,\beta\in[1/2,\infty)$ such that $\frac{1}{\alpha} + \frac{1}{\beta} = 2$, we then have
\begin{align}\label{eq:renyi-araki-lieb-conditional-proof}
\tS_\alpha(A|B)_\rho = -\tS_\beta(A|C)_\rho = \tD_\beta(\rho_{AC}\|\one_A\ox\tsigma_C) \geq \tD_\beta(\rho_A\|\one_A) = -S_\beta(A)_\rho.
\end{align}
Here, we used \Cref{prop:renyi-conditional-duality} in the first equality, chose an optimizing state $\tsigma_C$ for $\tS_\beta(A|C)_\rho$ in the second equality, and the inequality is simply the DPI for the $\beta$-SRD with respect to $\Lambda=\tr_C$.
We also obtain the upper bound $\tS_\alpha(A|B)_\rho \leq S_\alpha(A)_\rho$ by a simple application of the DPI with respect to $\Lambda=\tr_B$.
Hence, we have proved:
\begin{lemma}[Rényi version of the Araki-Lieb inequality]\label{prop:renyi-araki-lieb}
Let $\rho_{AB}$ be a bipartite state, and $\alpha,\beta\in[1/2,\infty)$ be such that $\frac{1}{\alpha} + \frac{1}{\beta} = 2$, then
\begin{align}\label{eq:renyi-araki-lieb}
-S_\beta(A)_\rho \leq \tS_\alpha(A|B)_\rho \leq S_\alpha(A)_\rho.
\end{align}
\end{lemma}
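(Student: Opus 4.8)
The plan is to establish the two bounds in \eqref{eq:renyi-araki-lieb} separately, in each case reducing the claim to the data processing inequality \eqref{eq:srd-dpi} for the $\alpha$-SRD, supplemented for the lower bound by the duality relation of \Cref{prop:renyi-conditional-duality}. Since both ingredients are already available, the argument is essentially a chaining of (in)equalities; the derivation sketched in \eqref{eq:renyi-araki-lieb-conditional-proof} is exactly what I would formalize.

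For the lower bound $-S_\beta(A)_\rho \leq \tS_\alpha(A|B)_\rho$, I would first fix a purification $|\rho\rangle_{ABC}$ of $\rho_{AB}$ and invoke \Cref{prop:renyi-conditional-duality} to write $\tS_\alpha(A|B)_\rho = -\tS_\beta(A|C)_\rho$. By definition, $-\tS_\beta(A|C)_\rho = \min_{\sigma_C}\tD_\beta\left(\rho_{AC}\doublemid\one_A\ox\sigma_C\right)$; since the set of states is compact and $\tD_\beta(\cdot\|\cdot)$ is continuous in its second argument, this minimum is attained at some $\tsigma_C$. I would then apply the DPI with $\Lambda=\tr_C$, using that $\tr_C(\one_A\ox\tsigma_C)=\one_A$ because $\tr\tsigma_C=1$, to get $\tD_\beta\left(\rho_{AC}\doublemid\one_A\ox\tsigma_C\right)\geq \tD_\beta(\rho_A\|\one_A) = -S_\beta(A)_\rho$ by the definition \eqref{eq:renyi-entropy}. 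Concatenating these three relations yields the lower bound.

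For the upper bound $\tS_\alpha(A|B)_\rho \leq S_\alpha(A)_\rho$, I would argue directly without duality: for every state $\sigma_B$, the DPI with $\Lambda=\tr_B$ gives $\tD_\alpha\left(\rho_{AB}\doublemid\one_A\ox\sigma_B\right)\geq \tD_\alpha(\rho_A\|\one_A) = -S_\alpha(A)_\rho$, again because $\tr_B(\one_A\ox\sigma_B)=\one_A$. Negating and then maximizing over $\sigma_B$ gives $\tS_\alpha(A|B)_\rho = -\min_{\sigma_B}\tD_\alpha\left(\rho_{AB}\doublemid\one_A\ox\sigma_B\right)\leq S_\alpha(A)_\rho$, as desired.

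I do not expect any genuine obstacle here, since the entire content is carried by the DPI and the duality relation already in hand. The only points demanding care are bookkeeping ones: verifying that the support/non-orthogonality hypotheses guaranteeing finiteness of the $\alpha$-SRD and validity of the DPI are met for the arguments $\rho_{AC}$, $\one_A\ox\tsigma_C$ and $\rho_{AB}$, $\one_A\ox\sigma_B$, and confirming that the partial traces of the second arguments collapse to $\one_A$ purely by normalization of the marginals. Neither of these is a serious difficulty, so the lemma follows immediately once the chain \eqref{eq:renyi-araki-lieb-conditional-proof} is assembled together with its upper-bound counterpart.
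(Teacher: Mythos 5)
Your proposal is correct and follows essentially the same route as the paper: the lower bound via duality (\Cref{prop:renyi-conditional-duality}), the optimizer $\tsigma_C$, and the DPI with $\Lambda=\tr_C$ exactly as in \eqref{eq:renyi-araki-lieb-conditional-proof}, and the upper bound via the DPI with $\Lambda=\tr_B$. The only addition is your explicit bookkeeping about attainment of the minimum and the support conditions, which the paper leaves implicit.
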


Since the inequality in the lower bound of \Cref{prop:renyi-araki-lieb} stems from the DPI for $\tD_\beta(\cdot\|\cdot)$ (cf.~\eqref{eq:renyi-araki-lieb-conditional-proof}), we can apply \Cref{thm:eq-condition} (in the form of \Cref{prop:eq-condition-partial-trace}) to investigate the case of equality.
By \Cref{prop:eq-condition-partial-trace} we have $\tD_\beta(\rho_{AC}\|\one_A\ox\tsigma_C) = \tD_\beta(\rho_A\|\one_A)$ if and only if
\begin{align}\label{eq:eq-in-renyi-araki-lieb}
\rho_A^{\beta-1} \ox \one_C = \left(\one_A\ox\tsigma_C^\delta\right) \left( \left(\one_A\ox\tsigma_C^\delta\right) \rho_{AC} \left(\one_A\ox\tsigma_C^\delta\right) \right)^{\beta-1} \left(\one_A\ox\tsigma_C^\delta\right)\!,
\end{align}
where $\delta = (1-\beta)/2\beta$.
It is easy to see that \eqref{eq:eq-in-renyi-araki-lieb} is equivalent to $\rho_{AC} = \rho_A\ox\tsigma_C$, that is, if $\rho_{ABC}$ is a purification of $\rho_{AB}$, then the marginal $\rho_{AC}$ is of product form.
We can then go through the same steps as in the proof of Theorem 1.4 in \cite{CL12} (which we stated as \Cref{thm:equality-in-araki-lieb} above) to arrive at the following Rényi generalization of this result:
\begin{theorem}[Equality in the Rényi version of the Araki-Lieb inequality]\label{thm:equality-in-renyi-araki-lieb}
Let $\rho_{AB}$ be a bipartite state with purification $\rho_{ABC}$, and let $\alpha,\beta\in[1/2,\infty)$ be such that $1/\alpha + 1/\beta = 2$. 
Denote by $r_{AB}$, $r_{A}$, and $r_B$ the ranks of $\rho_{AB}$, $\rho_A$, and $\rho_B$, respectively.
We have equality in the Rényi version of the Araki-Lieb inequality, 
\begin{align}
\tS_\alpha(A|B)_\rho = -S_\beta(A)_\rho,
\end{align} 
if and only if the following conditions are satisfied:
\begin{enumerate}[{\normalfont (i)}]
\item $r_B = r_A r_{AB}.$

\item The state $\rho_{AB}$ has a spectral decomposition of the form
\begin{align}
\rho_{AB} = \sum_{i=1}^{r_{AB}} \lambda_i |i\rangle\langle i|_{AB},
\end{align}
where the vectors $\lbrace |i\rangle_{AB}\rbrace_{i=1}^{r_{AB}}$ are such that $\tr_B|i\rangle \langle j|_{AB} = \delta_{ij}\rho_A$ for $i,j=1,\dots,r_{AB}$.
\end{enumerate}
\end{theorem}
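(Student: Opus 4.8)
The plan is to build directly on the reduction already carried out just before the theorem: by \Cref{prop:eq-condition-partial-trace}, equality in the lower bound of \Cref{prop:renyi-araki-lieb} is equivalent to the marginal of the purification being of product form, $\rho_{AC} = \rho_A \ox \tsigma_C$, where $\tsigma_C$ is the state on $C$ optimizing $\tS_\beta(A|C)_\rho$. The first thing I would observe is that this product form pins down $\tsigma_C$ completely: tracing out the $A$ system and using $\tr\rho_A = 1$ gives $\rho_C = \tsigma_C$. Hence the equality condition is exactly $\rho_{AC} = \rho_A \ox \rho_C$, which is precisely the condition appearing in the von Neumann case treated by Carlen and Lieb. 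This collapses the Rényi problem onto the original one, and it remains only to show that $\rho_{AC} = \rho_A \ox \rho_C$ is equivalent to conditions (i) and (ii). Moreover, since any two purifications of $\rho_{AB}$ differ by an isometry on $C$, under which the property ``$\rho_{AC}$ is a product'' is preserved, and since (i), (ii), and the equality statement all depend only on $\rho_{AB}$, I am free to work with whichever purification is convenient.

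For the forward direction I assume $\rho_{AC} = \rho_A \ox \rho_C$. Writing the Schmidt decomposition of the pure state $\rho_{ABC}$ across the $AB \,|\, C$ cut as $\sum_{i=1}^{r_{AB}} \sqrt{\lambda_i}\,|i\rangle_{AB}|e_i\rangle_C$ (its Schmidt rank equals $\rk\rho_{AB} = \rk\rho_C = r_{AB}$ by purity), I immediately obtain the spectral decomposition $\rho_{AB} = \sum_i \lambda_i\,|i\rangle\langle i|_{AB}$ required in (ii). Computing $\rho_{AC} = \tr_B \rho_{ABC} = \sum_{i,j}\sqrt{\lambda_i\lambda_j}\,\tr_B(|i\rangle\langle j|_{AB})\ox|e_i\rangle\langle e_j|_C$ and matching it against $\rho_A \ox \rho_C = \sum_i \lambda_i\,\rho_A\ox|e_i\rangle\langle e_i|_C$ in the linearly independent matrix-unit system $\{|e_i\rangle\langle e_j|_C\}$ forces $\tr_B(|i\rangle\langle j|_{AB}) = \delta_{ij}\rho_A$ (the strict positivity $\lambda_i>0$ lets me divide through), which is exactly (ii); and taking ranks gives $r_B = \rk\rho_B = \rk\rho_{AC} = \rk\rho_A\cdot\rk\rho_C = r_A\, r_{AB}$, again using purity, which is (i).

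For the converse I assume (ii) and use the eigendecomposition $\rho_{AB} = \sum_i \lambda_i\,|i\rangle\langle i|_{AB}$ with $\tr_B|i\rangle\langle j|_{AB} = \delta_{ij}\rho_A$ to build the canonical purification $\sum_i \sqrt{\lambda_i}\,|i\rangle_{AB}|i\rangle_C$ with $\{|i\rangle_C\}$ orthonormal, so that $\rho_C = \sum_i\lambda_i|i\rangle\langle i|_C$. A direct computation then gives $\rho_{AC} = \sum_{i,j}\sqrt{\lambda_i\lambda_j}\,\tr_B(|i\rangle\langle j|_{AB})\ox|i\rangle\langle j|_C = \sum_i \lambda_i\,\rho_A\ox|i\rangle\langle i|_C = \rho_A\ox\rho_C$. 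By the purification-independence noted in the first paragraph, $\rho_{AC} = \rho_A \ox \rho_C$ then holds for the original purification as well, which yields equality; I note in passing that (i) is in fact forced by (ii) through the same rank count as above, so the two conditions are consistent.

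I expect the only genuine delicacy --- rather than a serious obstacle --- to lie in the bookkeeping around purifications: justifying that $\tsigma_C = \rho_C$, that the product property transfers between purifications, and that the $AB$-Schmidt vectors of the purification may be taken to be the eigenvectors $|i\rangle_{AB}$ figuring in (ii), which requires choosing a compatible eigenbasis when the spectrum of $\rho_{AB}$ is degenerate. Once these points are settled the remaining manipulations are routine, and the argument mirrors the proof of \Cref{thm:equality-in-araki-lieb} in \cite{CL12} step for step; the single substantive new ingredient over that proof is the observation that the optimizing $\tsigma_C$ must coincide with $\rho_C$, which is what allows the Rényi case to be handled by the same reasoning.
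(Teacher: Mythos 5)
Your proposal is correct and takes essentially the same route as the paper: the paper likewise reduces equality, via \Cref{prop:eq-condition-partial-trace} and duality, to the product condition $\rho_{AC}=\rho_A\ox\tsigma_C$ for a purification, and then defers to the steps of Carlen and Lieb's proof of \Cref{thm:equality-in-araki-lieb}, which are exactly the Schmidt-decomposition, coefficient-matching, and rank-counting arguments you write out. The one point to tighten is the converse direction, since your trace-out-$A$ argument for $\tsigma_C=\rho_C$ presupposes the product condition with the \emph{optimizer}; starting only from $\rho_{AC}=\rho_A\ox\rho_C$ you should add that
\begin{align}
\tD_\beta(\rho_A\ox\rho_C\|\one_A\ox\sigma_C)=\tD_\beta(\rho_A\|\one_A)+\tD_\beta(\rho_C\|\sigma_C)\geq \tD_\beta(\rho_A\|\one_A)
\end{align}
for every state $\sigma_C$, with equality at $\sigma_C=\rho_C$ by faithfulness of $\tD_\beta$, so that $\tS_\alpha(A|B)_\rho=-\tS_\beta(A|C)_\rho=-S_\beta(A)_\rho$ indeed follows.
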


\begin{remark}
For the upper bound $\tS_\alpha(A|B)_\rho \leq S_\alpha(A)_\rho$ in \Cref{prop:renyi-araki-lieb}, we have equality if and only if $\tD_\alpha(\rho_{AB}\|\one_A\ox\tsigma_B) = \tD_\alpha(\rho_A\|\one_A)$, where $\tsigma_B$ is a state optimizing $\tS_\alpha(A|B)_\rho$.
Similar to above, we obtain from \Cref{prop:eq-condition-partial-trace} that this is the case if and only if $\rho_{AB} = \rho_A\ox \tsigma_B$.
\end{remark}

\subsection{Rényi entanglement of formation}\label{sec:EoF}
Let $\rho_{AB}$ be a bipartite state, then the \emph{entanglement of formation} (EoF) $E_F(\rho_{AB})$ \cite{BDSW96,BBP+96} is defined as the least expected entropy of entanglement of any ensemble of pure states realizing $\rho_{AB}$:
\begin{align}\label{eq:EoF}
E_F(\rho_{AB}) \coloneqq \min_{\lbrace p_i,\psi_i\rbrace} \sumi_i p_i S(\tr_B{\psi_i}),
\end{align}
where the minimum is over ensembles of pure states $\lbrace p_i,\psi_i\rbrace$ such that $\rho_{AB} = \sum_i p_i |\psi_i\rangle \langle \psi_i|$.
This entanglement measure satisfies $E_F(\rho_{AB})\geq 0$ for all $\rho_{AB}$, and is furthermore \emph{faithful}, that is, $E_F(\rho_{AB}) = 0$ if and only if $\rho_{AB}$ is separable.
The EoF is an upper bound on the (two-way) distillable entanglement \cite{BDSW96,BBP+96}.
Moreover, its regularized version $E_F^\infty(\rho_{AB})\coloneqq \lim_{n\to \infty}E_F(\rho_{AB}^{\ox n})/n$ is equal to the asymptotic entanglement cost of preparing the state $\rho_{AB}$ \cite{HHT01}.
Carlen and Lieb \cite{CL12} prove the following result, which provides a lower bound on $E_F(\rho_{AB})$ that is achieved by states saturating the Araki-Lieb inequality (\Cref{thm:equality-in-araki-lieb}):
\begin{theorem}[{\cite{CL12}}]\label{thm:EoF}
Let $\rho_{AB}$ be a bipartite state.
Then
\begin{align}\label{eq:EoF-lower-bound}
E_F(\rho_{AB}) \geq \max\left\lbrace -S(A|B)_\rho, -S(B|A)_\rho,0\right\rbrace\!,
\end{align}
and this bound is saturated by states satisfying the conditions of \Cref{thm:equality-in-araki-lieb}.
That is, for states $\rho_{AB}$ with $S(A|B)_\rho = -S(A)_\rho$, we have
\begin{align}
E_F(\rho_{AB}) = -S(A|B)_\rho.
\end{align}
\end{theorem}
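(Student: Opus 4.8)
The plan is to prove the lower bound \eqref{eq:EoF-lower-bound} for all bipartite states, and then to check that the states singled out by \Cref{thm:equality-in-araki-lieb} meet it with equality. Since the lower bound already gives $E_F(\rho_{AB})\geq -S(A|B)_\rho$, it will suffice to produce one pure-state decomposition whose average marginal entropy equals $-S(A|B)_\rho$.

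For the lower bound I would exploit the concavity of the conditional entropy. Given any decomposition $\rho_{AB}=\sum_i p_i|\psi_i\rangle\langle\psi_i|$, I would first record the elementary identity that for a pure state $\psi_i$ one has $S(\tr_B\psi_i)=S(A)_{\psi_i}=S(B)_{\psi_i}=-S(A|B)_{\psi_i}$, using $S(AB)_{\psi_i}=0$. Thus the EoF objective in \eqref{eq:EoF} equals $-\sum_i p_i S(A|B)_{\psi_i}$. I would then attach a classical flag $X$ recording $i$, forming $\rho_{XAB}=\sum_i p_i|i\rangle\langle i|_X\ox\psi_i$, observe that $S(A|BX)_\rho=\sum_i p_i S(A|B)_{\psi_i}$, and invoke strong subadditivity in the form $S(A|BX)_\rho\leq S(A|B)_\rho$. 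Combining these gives $\sum_i p_i S(\tr_B\psi_i)\geq -S(A|B)_\rho$ for every decomposition; minimizing yields $E_F(\rho_{AB})\geq -S(A|B)_\rho$. The companion bound $-S(B|A)_\rho$ follows by symmetry, and $E_F(\rho_{AB})\geq 0$ is immediate, establishing \eqref{eq:EoF-lower-bound}.

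For the saturation I would feed the explicit structure from \Cref{thm:equality-in-araki-lieb} directly into \eqref{eq:EoF}. Under Araki-Lieb equality $S(AB)_\rho=S(B)_\rho-S(A)_\rho$, so $-S(A|B)_\rho=S(A)_\rho$, and the theorem provides a spectral decomposition $\rho_{AB}=\sum_i\lambda_i|i\rangle\langle i|_{AB}$ in which every eigenvector has the same $A$-marginal, $\tr_B|i\rangle\langle i|_{AB}=\rho_A$. Using this ensemble as a feasible choice, each term of \eqref{eq:EoF} contributes $S(\rho_A)=S(A)_\rho$, so the average is $S(A)_\rho=-S(A|B)_\rho$. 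This supplies the matching upper bound, and together with the lower bound we conclude $E_F(\rho_{AB})=-S(A|B)_\rho$.

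I expect the lower bound to be the only real obstacle: its crux is the concavity of the conditional entropy (equivalently, strong subadditivity realized through the flag register) together with the pure-state identity $S(\tr_B\psi_i)=-S(A|B)_{\psi_i}$. Once these are established the saturation is essentially automatic, since \Cref{thm:equality-in-araki-lieb} hands us an eigenbasis all of whose vectors share the marginal $\rho_A$, hence the same entropy of entanglement, making the spectral decomposition itself an optimal EoF ensemble.
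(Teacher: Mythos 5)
Your proof is correct, but your route to the lower bound is genuinely different from the paper's. The paper (in the remark following \Cref{thm:renyi-EoF}, which it presents as a new proof of \eqref{eq:EoF-lower-bound}) purifies $\rho_{AB}$ with a flag register $C$ recording the optimal ensemble, invokes duality $S(A|B)_\rho = -S(A|C)_\rho = D(\rho_{AC}\|\one_A\ox\rho_C)$, and then applies the DPI for the quantum relative entropy under the pinching in the flag basis; the ensemble average $\sum_i q_i S(\tr_B\phi_i)$ emerges from an explicit evaluation of the pinched relative entropy plus a non-negative Kullback--Leibler term. You instead work directly with the classical extension $\rho_{XAB}=\sum_i p_i |i\rangle\langle i|_X\ox\psi_i$, use the exact identity $S(A|BX)_\rho = \sum_i p_i S(A|B)_{\psi_i}$, and conclude via strong subadditivity in the form $S(A|BX)_\rho\leq S(A|B)_\rho$ together with the pure-state identity $S(\tr_B\psi_i) = -S(A|B)_{\psi_i}$. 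The two arguments rest on equivalent pillars (SSA versus DPI for the relative entropy), but yours avoids purification and duality entirely and is more elementary for the von Neumann case. What the paper's route buys is generalizability: the pinching/DPI argument carries over essentially verbatim to the sandwiched R\'enyi divergence (it is exactly the proof of \eqref{eq:renyi-EoF-lower-bound} in \Cref{thm:renyi-EoF}), whereas your chain-rule identity $S(A|BX)_\rho = \sum_i p_i S(A|B)_{\psi_i}$ and the concavity of conditional entropy have no clean R\'enyi analogues. The saturation part of your proof --- feeding the spectral ensemble from \Cref{thm:equality-in-araki-lieb} into \eqref{eq:EoF} and noting that every eigenvector has the same marginal $\rho_A$, hence entropy of entanglement $S(A)_\rho = -S(A|B)_\rho$ --- coincides with the paper's argument (carried out in the R\'enyi setting in the proof of \Cref{thm:renyi-EoF}).
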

\begin{remark}
If $S(A|B)_\rho = -S(A)_\rho$, then $E_F(\rho_{AB}) = -S(A|B)_\rho \geq -S(B|A)_\rho$ by \eqref{eq:EoF-lower-bound}.
\end{remark}

Using the results of \Cref{sec:renyi-araki-lieb}, our goal in this section is to obtain a Rényi generalization of \Cref{thm:EoF}.
To this end, we consider the \emph{Rényi entanglement of formation} (REoF) $E_{F,\alpha}(\rho_{AB})$ \cite{Vid00,WMVF16}, which is obtained from the definition of $E_F(\rho_{AB})$ in \eqref{eq:EoF} by replacing the von Neumann entropy with the Rényi entropy of order $\alpha\geq 0$ (as defined in \eqref{eq:renyi-entropy}):
\begin{align}
E_{F,\alpha}(\rho_{AB}) \coloneqq \min_{\lbrace p_i,\psi_i\rbrace} \sumi_i p_i S_\alpha(\tr_B{\psi_i})
\end{align}
Note that in \cite{SBW14} the authors consider a different Rényi generalization of the EoF based on the $\alpha$-Rényi conditional entropy.
As in the von Neumann case, the REoF satisfies $E_{F,\alpha}(\rho_{AB})\geq 0$ for all $\rho_{AB}$, and it is faithful as well.
We prove the following generalization of \Cref{thm:EoF} for $\alpha>1$:
\begin{theorem}\label{thm:renyi-EoF}
Let $\rho_{AB}$ be a bipartite state, and let $\alpha>1$ and $\beta = \alpha/(2\alpha-1)\in(1/2,1)$ such that $1/\alpha + 1/\beta = 2$.
Then we have the following bound on the REoF:
\begin{align}\label{eq:renyi-EoF-lower-bound}
E_{F,\alpha}(\rho_{AB}) \geq \max\left\lbrace -\tS_\beta(A|B)_\rho, -\tS_\beta(B|A)_\rho,0\right\rbrace\!.
\end{align}
If $\rho_{AB}$ saturates the Rényi version \eqref{eq:renyi-araki-lieb} of the Araki-Lieb inequality with Rényi parameter $\beta$, that is, $\tS_\beta(A|B)_\rho = -S_\alpha(A)_\rho$, then
\begin{align}\label{eq:renyi-EoF-saturation}
E_{F,\alpha}(\rho_{AB}) = -\tS_\beta(A|B)_\rho.
\end{align}
\end{theorem}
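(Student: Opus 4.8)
The plan is to follow the strategy of Carlen and Lieb's proof of \Cref{thm:EoF}, adapting each step to the Rényi setting. The starting point is a pure-state identity: for any pure state $\psi_{AB}$ one has $S_\alpha(\tr_B\psi) = -\tS_\beta(A|B)_\psi$. I would establish this by applying the duality relation of \Cref{prop:renyi-conditional-duality} to the pure state $\psi_{AB}\ox|0\rangle\langle0|_C$ with a one-dimensional ancilla $C$: duality gives $\tS_\beta(A|B)_\psi = -\tS_\alpha(A|C)_\psi$ (legitimate since $1/\alpha+1/\beta=2$), and since $C$ is trivial the only state $\sigma_C$ is scalar, so $\tS_\alpha(A|C)_\psi = -\tD_\alpha(\psi_A\|\one_A) = S_\alpha(A)_\psi = S_\alpha(\tr_B\psi)$. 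Because the reduced states of a pure bipartite state are isospectral, the same identity holds with $A$ and $B$ interchanged, $S_\alpha(\tr_A\psi) = -\tS_\beta(B|A)_\psi$, which will feed the second term in the maximum.

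For the lower bound \eqref{eq:renyi-EoF-lower-bound}, I would first record that $\tD_\beta(\cdot\|\cdot)$ is jointly \emph{convex} for $\beta\in[1/2,1)$: this follows from the joint concavity of $\tQ_\beta$ (\Cref{prop:TF-joint-convexity}) by writing $\tD_\beta = \frac{1}{\beta-1}\log\tQ_\beta$, using monotonicity and concavity of the logarithm, and noting that the prefactor $\frac{1}{\beta-1}$ is negative so the inequality reverses. Then, for an arbitrary ensemble $\lbrace p_i,\psi_i\rbrace$ realizing $\rho_{AB}$, I would write $\sum_i p_i S_\alpha(\tr_B\psi_i) = \sum_i p_i\min_{\sigma_B}\tD_\beta(\psi_i\|\one_A\ox\sigma_B)$ via the pure-state identity, select the per-term optimizer $\sigma_B^{(i)}$, and invoke joint convexity with the averaged state $\bar\sigma_B = \sum_i p_i\sigma_B^{(i)}$ to obtain $\sum_i p_i\tD_\beta(\psi_i\|\one_A\ox\sigma_B^{(i)}) \geq \tD_\beta(\rho_{AB}\|\one_A\ox\bar\sigma_B) \geq -\tS_\beta(A|B)_\rho$. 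Minimizing over ensembles gives $E_{F,\alpha}(\rho_{AB})\geq -\tS_\beta(A|B)_\rho$; the $A\leftrightarrow B$ version gives the $-\tS_\beta(B|A)_\rho$ term, and nonnegativity of the Rényi entropy supplies the $0$.

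For the saturation claim, I would invoke \Cref{thm:equality-in-renyi-araki-lieb} with the roles of $\alpha$ and $\beta$ interchanged (permissible since the constraint $1/\alpha+1/\beta=2$ is symmetric and conditions (i)--(ii) do not depend on the parameter): the hypothesis $\tS_\beta(A|B)_\rho = -S_\alpha(A)_\rho$ then forces a spectral decomposition $\rho_{AB} = \sum_i\lambda_i|i\rangle\langle i|_{AB}$ with $\tr_B|i\rangle\langle i|_{AB} = \rho_A$ for every $i$. Taking this eigen-ensemble as a candidate in the definition of $E_{F,\alpha}$ yields $\sum_i\lambda_i S_\alpha(\tr_B|i\rangle\langle i|_{AB}) = \sum_i\lambda_i S_\alpha(\rho_A) = S_\alpha(A)_\rho = -\tS_\beta(A|B)_\rho$, matching the lower bound and hence proving \eqref{eq:renyi-EoF-saturation}.

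The routine parts are the pure-state identity and the eigen-ensemble evaluation; the step demanding the most care is the lower bound, specifically the interplay between the minimization over $\sigma_B$ defining $\tS_\beta(A|B)$ and the joint convexity of $\tD_\beta$. One must choose per-term optimizers $\sigma_B^{(i)}$, pass to their average $\bar\sigma_B$ by convexity, and only then relax to the global minimizer. This is precisely where it is essential that joint convexity holds for $\tD_\beta$ itself (not merely for $\tQ_\beta$) on the range $\beta\in(1/2,1)$, which in turn hinges on the negative sign of $\frac{1}{\beta-1}$ together with the concavity of the logarithm.
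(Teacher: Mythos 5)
Your proposal is correct, and its route to the lower bound \eqref{eq:renyi-EoF-lower-bound} is genuinely different from the paper's. The paper fixes an optimal ensemble $\lbrace q_i,\phi_i\rbrace$, forms the purification $|\rho\rangle_{ABC}=\sum_i\sqrt{q_i}\,|\phi_i\rangle_{AB}|i\rangle_C$, passes by duality (\Cref{prop:renyi-conditional-duality}) to $\tD_\alpha(\rho_{AC}\|\one_A\ox\tsigma_C)$ with $\alpha>1$, applies the pinching map in the basis $\lbrace|i\rangle_C\rbrace$ together with the DPI for $\tD_\alpha$, and evaluates the resulting commuting expression, finishing with concavity of the logarithm and non-negativity of the classical Kullback--Leibler divergence. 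You instead remain on the $AB$ system throughout: you use duality only in the degenerate form of the pure-state identity $S_\alpha(\tr_B\psi)=-\tS_\beta(A|B)_\psi$ (your derivation via a one-dimensional ancilla is valid), and then apply joint convexity of $\tD_\beta$ for $\beta\in(1/2,1)$, which you correctly extract from the joint concavity of $\tQ_\beta$ (\Cref{prop:TF-joint-convexity}) because the outer function $x\mapsto\frac{1}{\beta-1}\log x$ is convex and decreasing; averaging the per-term optimizers $\sigma_B^{(i)}$ into $\bar\sigma_B$ and only then relaxing to the global minimizer is exactly the right order of operations. Your argument is shorter and isolates the real mechanism---joint convexity of the divergence itself, which is particular to the range $\beta<1$ and would not be available had the bound been run with $\tD_\alpha$, $\alpha>1$, where the prefactor $\frac{1}{\alpha-1}$ is positive and the composition argument breaks down, leaving only the DPI. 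The paper's pinching route, by contrast, yields the stronger intermediate inequality $\tS_\beta(A|B)_\rho\geq -E_{F,\alpha}(\rho_{AB})+D(\lbrace q_i\rbrace\|\lbrace\lambda_i\rbrace)$ with $\lambda_i=\langle i|\tsigma_C|i\rangle$ (an extra non-negative term it then discards), and its DPI-only technique generalizes to settings where joint convexity is unavailable. Your treatment of the saturation claim \eqref{eq:renyi-EoF-saturation}---invoking \Cref{thm:equality-in-renyi-araki-lieb} with the roles of $\alpha$ and $\beta$ interchanged, legitimate since the constraint $1/\alpha+1/\beta=2$ is symmetric and conditions (i)--(ii) are parameter-free, then evaluating the eigen-ensemble---coincides with the paper's proof.
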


\begin{proof}
Let $\lbrace q_i, \phi_i\rbrace$ be an ensemble of pure states minimizing the REoF, that is, 
$
E_{F,\alpha}(\rho_{AB}) = \sum_i q_i S_\alpha(\tr_B(\phi_i)).
$
We define a purification $\rho_{ABC}$ of $\rho_{AB}$ by $|\rho\rangle_{ABC} = \sum_i \sqrt{q_i} |\phi_i\rangle_{AB}|i\rangle_C$, where $\lbrace |i\rangle_C\rbrace_i$ is an orthonormal basis for $\cH_C$.
Denoting by $\tsigma_C$ the state optimizing the Rényi conditional entropy $\tS_\alpha(A|C)_\rho$, we have
\begin{align}
\tS_\beta(A|B)_\rho &= -\tS_\alpha(A|C)_\rho\\
&= \tD_\alpha(\rho_{AC}\|\one_A\ox\tsigma_C)\\
&= \tD_\alpha\!\left(\sumi_{i,j} \sqrt{q_i q_j} \tr_B|\phi_i\rangle\langle\phi_j|_{AB}\ox |i\rangle\langle j|_{C} \doublemid \one_A\ox\tsigma_C\right),
\end{align}
where the first line follows from \Cref{prop:renyi-conditional-duality}.
We now apply the pinching map $\rho\mapsto \sum_i |i\rangle \langle i|_C\rho |i\rangle\langle i|_C$ (which is a quantum operation) to both states and use the DPI for $\tD_\alpha(\cdot\|\cdot)$. 
Setting $\lambda_i = \langle i|\tsigma_C|i\rangle$, we obtain
\begin{align}
\tS_\beta(A|B)_\rho &\geq \tD_\alpha\!\left(\sumi_i q_i \tr_B\phi_i \ox |i\rangle\langle i|_C \doublemid \one_A \ox \sumi_i \lambda_i |i\rangle \langle i|_C\right)\\
&= \frac{1}{\alpha-1} \log \left\lbrace\tr\left[\sumi_i q_i^\alpha \lambda_i^{1-\alpha} \left(\tr_B\phi_i\right)^\alpha\ox |i\rangle\langle i|_C\right]\right\rbrace\\
&= \frac{1}{\alpha-1} \log \left\lbrace\sumi_i q_i \left(q_i/\lambda_i\right)^{\alpha-1} \tr\left[\left(\tr_B\phi_i\right)^{\alpha}\right]\right\rbrace\\
&\geq \frac{1}{\alpha-1} \sumi_i q_i \log\left\lbrace \left(q_i/\lambda_i\right)^{\alpha-1} \tr\left[\left(\tr_B\phi_i\right)^{\alpha}\right]\right\rbrace\\
&= \sumi_i q_i \frac{1}{\alpha-1}\log\tr\left[\left(\tr_B\phi_i\right)^{\alpha}\right] + \sumi_i q_i \log (q_i/\lambda_i)\\
&= - E_{F,\alpha}(\rho_{AB}) + D\!\left(\lbrace q_i\rbrace\|\lbrace \lambda_i\rbrace\right)\\
&\geq - E_{F,\alpha}(\rho_{AB}).
\end{align}
In the first equality we used the fact that the states $\sumi_i q_i \tr_B\phi_i \ox |i\rangle\langle i|_C$ and $\one_A \ox \sumi_i \lambda_i |i\rangle \langle i|_C$ commute, and hence $\tD_\alpha(\cdot\|\cdot)$ reduces to the ordinary $\alpha$-RRE $D_\alpha(\omega\|\tau) = \frac{1}{\alpha-1}\log\tr(\omega^\alpha \tau^{1-\alpha})$.
In the second inequality we used concavity of the logarithm together with $\alpha>1$, and in the last inequality we used non-negativity of the classical Kullback-Leibler divergence, defined for probability distributions $P$ and $Q$ on an alphabet $\cX$ as $D(P\|Q) = \sum_{x\in \cX} P(x)\log P(x)/Q(x)$, provided that $P(x)=0$ whenever $Q(x)=0$.
Note that the latter is satisfied as $\supp\rho_C\subseteq\supp\tsigma_C$ holds for the optimizing state $\tsigma_C$ of $\tS_\alpha(A|C)_\rho$ \cite{MDSFT13}.
The bound $E_{F,\alpha}(\rho_{AB})\geq -\tS_\beta(B|A)_\rho$ follows in an analogous way, yielding \eqref{eq:renyi-EoF-lower-bound}.

To prove \eqref{eq:renyi-EoF-saturation}, we first note that by \Cref{thm:equality-in-renyi-araki-lieb} the state $\rho_{AB}$ satisfies $\tS_\beta(A|B)_\rho = -S_\alpha(A)_\rho$ if and only if the rank condition of \Cref{thm:equality-in-renyi-araki-lieb}(i) holds and $\rho_{AB}$ has a spectral decomposition of the form
\begin{align}
\rho_{AB} = \sumi_i \lambda_i |i\rangle\langle i|_{AB},
\end{align}
where the vectors $\lbrace |i\rangle_{AB} \rbrace$ satisfy $\tr_B|i\rangle\langle j|_{AB} = \delta_{ij}\rho_A$.
We can now employ the same argument used in \cite{CL12} in the proof of the second assertion of \Cref{thm:EoF}, to prove the corresponding assertion of \Cref{thm:renyi-EoF}:
\begin{align}
E_{F,\alpha}(\rho_{AB}) &= \min_{\lbrace p_i,\psi_i\rbrace} \sumi_i p_i S_\alpha(\tr_B\psi_i)\\
&\leq \sumi_i \lambda_i S_\alpha(\tr_B|i\rangle\langle i|_{AB})\\
&= \sumi_i \lambda_i S_\alpha(A)_\rho\\
&= S_\alpha(A)_\rho\\
&= -\tS_\beta(A|B)_\rho
\end{align}
where in the inequality we chose the particular ensemble $\lbrace \lambda_i, |i\rangle_{AB}\rbrace$ realizing $\rho_{AB}$.
This upper bound on $E_{F,\alpha}(\rho_{AB})$, together with the general lower bound in \eqref{eq:renyi-EoF-lower-bound}, yields the claim.
\end{proof}

\begin{remark}~
\begin{enumerate}[(i)]
\item The proof method of the lower bound \eqref{eq:renyi-EoF-lower-bound} for $E_{F,\alpha}(\rho_{AB})$ in \Cref{thm:renyi-EoF} can be specialized to the quantum relative entropy $D(\cdot\|\cdot)$, providing a new proof of \eqref{eq:EoF-lower-bound} in \Cref{thm:EoF}:
\begin{align}
S(A|B)_\rho &= - S(A|C)_\rho\\
&= D(\rho_{AC}\|\one_A\ox\rho_C)\\
&= D\!\left(\sumi_{i,j} \sqrt{q_i q_j} \tr_B|\phi_i\rangle\langle\phi_j|_{AB}\ox |i\rangle\langle j|_{C} \doublemid \one_A\ox\rho_C\right)\\
&\geq D\!\left(\sumi_i q_i \tr_B\phi_i \ox |i\rangle\langle i|_C \doublemid \one_A \ox \sumi_i \lambda_i |i\rangle \langle i|_C\right)\\
&= D(\lbrace q_i\rbrace\|\lbrace \lambda_i\rbrace) + \sumi_i q_i D\!\left(\tr_B\phi_i\|\one_A\right)\\
&= D(\lbrace q_i\rbrace\|\lbrace \lambda_i\rbrace) - \sumi_i q_i S(\tr_B\phi_i)\\
&\geq - E_F(\rho_{AB}).
\end{align}
The bound $S(B|A)_\rho \geq - E_F(\rho_{AB})$ can be proved in an analogous way.

\item If a state $\rho_{AB}$ satisfies $\tS_\beta(A|B)_\rho = -S_\alpha(A)_\rho$ for $a>1$ and $\beta = \alpha/(2\alpha-1)$, then $E_{F,\alpha}(\rho_{AB}) = -\tS_\beta (A|B)_\rho \geq -\tS_\beta(B|A)_\rho$ by \eqref{eq:renyi-EoF-lower-bound} in \Cref{thm:renyi-EoF}.
\end{enumerate}
\end{remark}

\subsection{Entanglement fidelity}\label{sec:entanglement-fidelity}
For a state $\rho\in\cD(\cH)$ and a quantum channel $\cN\colon \cB(\cH)\to\cB(\cH)$, the \emph{entanglement fidelity} $F_e(\rho,\cN)$ \cite{Sch96} is defined as
\begin{align}\label{eq:entanglement-fidelity}
F_e(\rho,\cN) \coloneqq \langle\psi^\rho|(\cN\ox \id_{\cH'})(\psi^\rho)|\psi^\rho\rangle,
\end{align}
where the state $|\psi^\rho\rangle\in\cH\ox\cH'$ purifies $\rho$.
Since any two purifications of $\rho$ are related by an isometry acting on the purifying system, the definition \eqref{eq:entanglement-fidelity} of the entanglement fidelity is independent of the chosen purification.
For a mixed state $\rho$ with spectral decomposition $\rho = \sum_{i=1}^{\rk\rho} \lambda_i |i\rangle\langle i|_{\cH}$, a canonical purification is given by
\begin{align}
|\psi^\rho \rangle = \sum_{i=1}^{\rk\rho} \sqrt{\lambda_i} |i\rangle_\cH \ox |i\rangle_{\cH'}
\end{align}
for suitable orthonormal vectors $\lbrace |i\rangle_{\cH'}\rbrace_{i=1}^{\rk \rho}$ in $\cH'$.
Hence, in the following discussion we can assume without loss of generality that $\dim\cH' = \rk\rho$.

The entanglement fidelity $F_e(\rho,\cN)$ can be expressed in terms of the usual fidelity $F(\omega,\tau)\coloneqq \|\sqrt{\omega}\sqrt{\tau}\|_1$ as\footnote{For an arbitrary operator $A$, the trace norm is defined as $\|A\|_1 = \tr\sqrt{A^\dagger A}$.}
\begin{align}
F_e(\rho,\cN) = F(\psi^\rho,(\cN\ox \id_{\cH'})(\psi^\rho))^2.
\end{align}
We have $\|\sqrt{\omega}\sqrt{\tau}\|_1 = \tr(\sqrt{\tau} \omega \sqrt{\tau} )^{1/2}$ by definition of the trace norm, and hence the fidelity is related to the $1/2$-SRD via
\begin{align}\label{eq:fidelity-renyi}
F(\omega,\tau) = \tQ_{1/2}(\omega\|\tau).
\end{align}
It follows from the DPI for $\tQ_{1/2}(\cdot\|\cdot)$ that the fidelity is non-decreasing under partial trace.\footnote{Note that this can also be proved directly, e.g.~via Uhlmann's Theorem \cite{Uhl76}.}
This can be used to prove the following upper bound on the entanglement fidelity, where we write $\cN(\psi^\rho)\equiv (\cN\ox\id_{\cH'})(\psi^\rho)$:
\begin{align}
F_e(\rho,\cN) &= F(\psi^\rho, \cN(\psi^\rho))^2 \leq F(\rho,\cN(\rho))^2
\label{eq:fidelity-bound}
\end{align}

Due to \eqref{eq:fidelity-bound}, the entanglement fidelity provides a more stringent notion of distance between quantum states than the fidelity.
However, it is clear that we have equality in \eqref{eq:fidelity-bound} if the state $\rho$ is pure.
Using our condition for equality in the DPI for the $1/2$-SRD from \Cref{thm:eq-condition} (resp.~\Cref{prop:eq-condition-partial-trace}), we can prove that this is in fact the only case of equality:
\begin{proposition}\label{prop:entanglement-fidelity-equality}
Let $\rho\in\cD(\cH)$ and $\cN\colon\cB(\cH)\to\cB(\cH)$ be a quantum channel, then we have
\begin{align}
F_e(\rho,\cN) = F(\rho,\cN(\rho))^2
\end{align}
if and only if $\rho$ is pure.
\end{proposition}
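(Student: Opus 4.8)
The plan is as follows. The \emph{if} direction is immediate: if $\rho$ is pure then $\dim\cH'=\rk\rho=1$, so the purification carries no nontrivial purifying system and $\psi^\rho=\rho$, $\cN(\psi^\rho)=\cN(\rho)$, giving equality in \eqref{eq:fidelity-bound} trivially. For the \emph{only if} direction I would first recast the equality $F_e(\rho,\cN)=F(\rho,\cN(\rho))^2$ as saturation of the partial-trace monotonicity for the trace functional $\tQ_{1/2}(\cdot\|\cdot)$. Indeed, by \eqref{eq:fidelity-renyi} and the definition of the entanglement fidelity we have $F_e(\rho,\cN)=\tQ_{1/2}(\psi^\rho\|\cN(\psi^\rho))^2$ and $F(\rho,\cN(\rho))^2=\tQ_{1/2}(\rho\|\cN(\rho))^2$; since $\cN$ is trace preserving, tracing out the purifying system $\cH'$ sends $\psi^\rho\mapsto\rho$ and $\cN(\psi^\rho)\mapsto\cN(\rho)$. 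Hence equality in \eqref{eq:fidelity-bound} is precisely equality in the partial-trace monotonicity \eqref{eq:srd-partial-trace-mon} at $\alpha=1/2$, and I may invoke \Cref{prop:eq-condition-partial-trace} with $A=\cH$, $B=\cH'$, $\rho_{AB}=\psi^\rho$, $\sigma_{AB}=\cN(\psi^\rho)$, and $\alpha=\gamma=1/2$. Throughout I work in the substantive regime $F(\rho,\cN(\rho))>0$, which is exactly the support condition $\rho\not\perp\cN(\rho)$ (and, under the equality hypothesis, $\psi^\rho\not\perp\cN(\psi^\rho)$) needed to apply the equality condition for $\alpha<1$.

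With these identifications, and writing $G\coloneqq \cN(\rho)^{1/2}\bigl(\cN(\rho)^{1/2}\rho\,\cN(\rho)^{1/2}\bigr)^{-1/2}\cN(\rho)^{1/2}$, the equality condition of \Cref{prop:eq-condition-partial-trace} reads
\begin{align*}
G\ox\one_{\cH'} = \cN(\psi^\rho)^{1/2}\bigl(\cN(\psi^\rho)^{1/2}\psi^\rho\,\cN(\psi^\rho)^{1/2}\bigr)^{-1/2}\cN(\psi^\rho)^{1/2}.
\end{align*}
The key observation is a rank count. Since $\psi^\rho$ is a pure state, the operator $M\coloneqq \cN(\psi^\rho)^{1/2}\psi^\rho\,\cN(\psi^\rho)^{1/2}$ has rank one, and functional calculus preserves this, so its generalized inverse power $M^{-1/2}$ is again rank one; conjugating by $\cN(\psi^\rho)^{1/2}$ shows the right-hand side has rank at most one. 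It is moreover nonzero, since its Hilbert-Schmidt inner product with $\psi^\rho$ equals $\tQ_{1/2}(\psi^\rho\|\cN(\psi^\rho))=F(\psi^\rho,\cN(\psi^\rho))=\sqrt{F_e(\rho,\cN)}>0$ (cf.~the computation preceding \Cref{prop:eq-condition-partial-trace}). Hence the right-hand side has rank exactly one.

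Finally I would compare ranks across the identity. The left-hand side is $G\ox\one_{\cH'}$, whose rank equals $\rk(G)\cdot\dim\cH'$; here $G\neq 0$ because $\tr(\rho\,G)=\tQ_{1/2}(\rho\|\cN(\rho))=F(\rho,\cN(\rho))>0$, so $\rk(G)\geq 1$. Equating with the rank-one right-hand side forces $\rk(G)\cdot\dim\cH'=1$, whence $\dim\cH'=1$. Since we may take $\dim\cH'=\rk\rho$, this gives $\rk\rho=1$, i.e.~$\rho$ is pure, completing the proof. The main obstacle is the clean verification that the right-hand side collapses to a rank-one operator while the left-hand side is rigidly constrained to factor as a tensor product with $\one_{\cH'}$; once this structural mismatch is isolated the rank argument finishes immediately, and the only delicate point is restricting to the positive-fidelity regime so that the functional-calculus expressions and the equality condition are well defined.
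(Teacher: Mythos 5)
Your proof is correct and takes essentially the same route as the paper: recast the hypothesis as saturation of the data processing inequality for $\tQ_{1/2}(\cdot\|\cdot)$ under $\tr_{\cH'}$, invoke \Cref{prop:eq-condition-partial-trace} at $\alpha=\gamma=1/2$, and conclude by a rank argument that plays the rank-one right-hand side (from purity of $\psi^\rho$) against the tensor factor $\one_{\cH'}$ on the left, together with the convention $\dim\cH'=\rk\rho$. Your direct rank count on the equality condition is a minor streamlining of the paper's intermediate rearrangement $\rho\ox\one_{\cH'} = c(\rho,\cN)\,\cN(\rho)^{-1}\cN(\psi^\rho)\psi^\rho\cN(\psi^\rho)\cN(\rho)^{-1}$, and your explicit restriction to the regime $F(\rho,\cN(\rho))>0$ makes precise a support condition that the paper leaves implicit.
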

\begin{proof}
We have already noted above that purity of $\rho$ is sufficient for equality in \eqref{eq:fidelity-bound}.
If $F_e(\rho,\cN) = F(\rho,\cN(\rho))^2$, then \eqref{eq:fidelity-renyi} implies that we have equality in the DPI for the $1/2$-SRD with respect to $\Lambda=\tr_{\cH'}$.
Hence, \Cref{prop:eq-condition-partial-trace} yields
\begin{multline}
\cN(\rho)^{1/2} \left(\cN(\rho)^{1/2}\rho \cN(\rho)^{1/2}\right)^{-1/2} \cN(\rho)^{1/2} \ox \one_{\cH'}\\ = \cN(\psi^\rho)^{1/2} \left(\cN(\psi^\rho)^{1/2} \psi^\rho \cN(\psi^\rho)^{1/2}\right)^{-1/2} \cN(\psi^\rho)^{1/2},
\end{multline}
from which we obtain
\begin{align}
\rho \ox \one_{\cH'} = c(\rho,\cN) \cN(\rho)^{-1} \cN(\psi^\rho) \psi^\rho \cN(\psi^\rho) \cN(\rho)^{-1}\label{eq:ent-fid-eq-condition}
\end{align}
for a suitable constant $c(\rho,\cN)$.
Note that the right-hand side of \eqref{eq:ent-fid-eq-condition} has rank $1$, and hence $\rho\ox\one_{\cH'}$ is a pure state.
But this is only possible if $\rho$ is pure and $\dim\cH'=1$.
\end{proof}

\section{Conclusion and open questions}\label{sec:conclusion}
We have shown that equality in the DPI for the $\alpha$-SRD $\tD_\alpha(\cdot\|\cdot)$ holds for a quantum operation $\Lambda$, a quantum state $\rho$, and a positive semidefinite operator $\sigma$ (with suitable support conditions) if and only if the following algebraic condition is satisfied (setting $\gamma=(1-\alpha)/2\alpha$):
\begin{align}\label{eq:main-result}
\sigma^\gamma \left(\sigma^\gamma \rho \sigma^\gamma \right)^{\alpha-1} \sigma^\gamma = \Lambda^\dagger \!\left( \Lambda(\sigma)^\gamma \left[\Lambda(\sigma)^\gamma \Lambda(\rho) \Lambda(\sigma)^\gamma \right]^{\alpha-1} \Lambda(\sigma)^\gamma \right)
\end{align}
In the case of the $\alpha$-RRE $D_\alpha(\cdot\|\cdot)$ for $\alpha\in[0,2]$ (which includes the QRE corresponding to $\alpha=1$), a necessary and sufficient algebraic condition for equality in the DPI is given by \cite{Pet86c,Pet88,HMPB11}
\begin{align}\label{eq:qre-algebraic-condition}
\Lambda^\dagger\!\left(\Lambda(\sigma)^{-z}\Lambda(\rho)^z\right) = \sigma^{-z}\rho^{z}\quad\text{for all $z\in\mathbb{C}$.}
\end{align}
This can be rephrased in terms of the existence of a recovery map by an argument detailed in \cite{HMPB11}:
We have $D_\alpha(\rho\|\sigma) = D_\alpha(\Lambda(\rho)\|\Lambda(\sigma))$
if and only if there is a recovery map in the form of a quantum operation $\cR_{\sigma,\Lambda}$ such that
\begin{align}
\cR_{\sigma,\Lambda}(\Lambda(\sigma)) &= \sigma & \cR_{\sigma,\Lambda}(\Lambda(\rho)) &= \rho.\label{eq:recovery}
\end{align}
In general, a quantum operation $\Lambda$ is called \emph{sufficient} with respect to a set $\cS\subseteq\cD(\cH)$ of quantum states, if there exists a quantum operation $\cR$ satisfying $\cR(\Lambda(\tau)) = \tau$ for all $\tau\in\cS$.
Hence, \eqref{eq:recovery} says that $\Lambda$ is sufficient for $\lbrace \rho,\sigma\rbrace$.
Furthermore, the particular map $\cR_{\sigma, \Lambda}$ admits an explicit formula on the support of $\Lambda(\sigma)$:
\begin{align}\label{eq:recovery-map}
\cR_{\sigma,\Lambda}(\cdot) = \sigma^{1/2} \Lambda^\dagger\!\left(\Lambda(\sigma)^{-1/2} \cdot \Lambda(\sigma)^{-1/2}\right) \sigma^{1/2}.
\end{align}
Since $\cR_{\sigma,\Lambda}(\Lambda(\sigma)) = \sigma$ holds by definition \eqref{eq:recovery-map} of the recovery map, the non-trivial part of \eqref{eq:recovery} is the assertion $\cR_{\sigma,\Lambda}(\Lambda(\rho)) = \rho$.
Note that by a theorem of Petz \cite{Pet88} a quantum channel $\Lambda$ is sufficient for $\lbrace \rho,\sigma\rbrace$ if and only if $\cR_{\sigma,\Lambda}(\Lambda(\rho)) = \rho$ holds for the map defined in \eqref{eq:recovery-map}.
We also observe that the recovery map $\cR_{\sigma,\Lambda}$ is \emph{independent of $\alpha$}, and the existence of a map $\cR$ satisfying \eqref{eq:recovery} forces equality in the DPI for any $\alpha\in[0,2]$,
\begin{align}\label{eq:double-dpi}
D_\alpha(\rho\|\sigma) \geq D_\alpha(\Lambda(\rho)\|\Lambda(\sigma)) \geq D_\alpha(\cR(\Lambda(\rho))\|\cR(\Lambda(\sigma))) = D_\alpha(\rho\|\sigma),
\end{align}
where the first inequality follows from applying the DPI with respect to $\Lambda$, and the second follows from applying the DPI with respect to $\cR$.
Thus, we have equality in the DPI for the $\alpha$-RRE for \emph{all} $\alpha\in[0,2]$ once it holds for \emph{some} $\alpha\in[0,2]$.

Taking a closer look at the condition \eqref{eq:main-result} for equality in the DPI for the $\alpha$-SRD, it is easy to see that choosing $\alpha=2$ in \eqref{eq:main-result}  yields precisely the statement $\cR_{\sigma,\Lambda}(\Lambda(\rho)) = \rho$.
Hence, in the case $\alpha=2$ we have equality in the DPI for the $2$-SRD if and only if the recovery map $\cR_{\sigma,\Lambda}$ defined in \eqref{eq:recovery-map} satisfies \eqref{eq:recovery}.
This was already observed in \cite{DJW15} for positive trace-preserving maps.

Shortly after completion of the present paper, the connection between sufficiency and equality in the DPI for the $\alpha$-SRD was presented by Jenčová \cite{Jen16} and Hiai and Mosonyi \cite{HM16}.
The main result of \cite{Jen16} is that a $2$-positive trace-preserving map $\Lambda$ is sufficient with respect to $\lbrace\rho,\sigma\rbrace$ if and only if $\tD_\alpha(\Lambda(\rho)\|\Lambda(\sigma)) = \tD_\alpha(\rho\|\sigma)$ holds for some $\alpha >1$.
By the theorem of Petz \cite{Pet88} mentioned above, we therefore have equality in the DPI for the $\alpha$-SRD for any $\alpha>1$ if and only if the map $\cR_{\sigma,\Lambda}$ defined in \eqref{eq:recovery-map} satisfies \eqref{eq:recovery}.
Furthermore, a similar argument as in \eqref{eq:double-dpi} for $\tD_\alpha(\cdot\|\cdot)$ shows that equality holds in the DPI for the $\alpha$-SRD for all $\alpha>1$ if it holds for some $\alpha>1$.
This result settles the sufficiency question for the $\alpha$-SRD for the range $\alpha>1$ and $2$-positive trace-preserving maps (which include all quantum operations).
In \cite{HM16} sufficiency is analyzed for $2$-positive \emph{bistochastic} maps $\Lambda$, that is, both $\Lambda$ and $\Lambda^\dagger$ are $2$-positive and trace-preserving.
The main theorem of \cite{HM16} regarding the $\alpha$-SRD states conditions for sufficiency of $\Lambda$ for certain ranges of $\alpha$ (including the range $\alpha\in[1/2,1)$) under the additional assumption that one of the two states $\rho$ and $\sigma$ is a \emph{fixed point} of $\Lambda$.
In fact, this result is obtained as a corollary of a more general theorem analyzing sufficiency for the $\alpha$-$z$-Rényi relative entropies under similar assumptions.

In view of our main result (\Cref{thm:eq-condition}) and the results of \cite{HM16} and \cite{Jen16}, it remains an open question whether equality in the DPI for the $\alpha$-SRD in the range $\alpha\in(1/2,1)$ is equivalent to sufficiency of $\Lambda$ in our setting, in which $\Lambda$ is an arbitrary quantum operation and $\rho$ and $\sigma$ are states with $\rho \not\perp\sigma$.
Note that for $\alpha=1/2$ it is known that such a general sufficiency result cannot hold \cite{MO13}.\footnote{This can be seen as follows:
The $1/2$-SRD can be expressed in terms of the fidelity as $\tD_{1/2}(\rho\|\sigma) = -2\log F(\rho,\sigma)$.
It is well-known that for given $\rho$ and $\sigma$ there exists a measurement $M=\lbrace M_x\rbrace_{x\in\cX}$ for some finite set $\cX$ such that the fidelity $F(\rho,\sigma)$ is equal to the \emph{classical} fidelity of the measurement outcomes $\lbrace\tr(M_x\rho)\rbrace_{x\in\cX}$ and $\lbrace\tr(M_x\sigma)\rbrace_{x\in\cX}$ obtained from measuring $\rho$ and $\sigma$ (see e.g.~\cite{Wil13}).
Hence, for any two states $\rho$ and $\sigma$ we have equality in the DPI for the $1/2$-SRD with respect to the quantum operation $\cM(\omega) = \sum_{x\in\cX} \tr(\omega M_x) |x\rangle\langle x|$, and it is impossible to recover the states $\rho$ and $\sigma$ from the measurement outcomes alone.
This proves that a general sufficiency result as stated above cannot hold for $\alpha=1/2$.}

Regarding our results in \Cref{sec:applications} about entanglement measures and distances, it would be interesting to see whether the entropic bounds proved therein can be used to characterize information-theoretic tasks.

\paragraph{Acknowledgements.}
ND is grateful to Anna Jenčová and Mark M.~Wilde for earlier discussions on the issue of equality in the DPI for the $\alpha$-SRD in the $L_p$-space framework during the workshop `Beyond IID in Information Theory' (5-10 July 2016) in Banff, Canada.
The authors would also like to thank Will Matthews and Michał Studziński for interesting discussions.

\printbibliography[title={References},heading=bibintoc]
\end{document}